\newtheorem{theorem}{Theorem}
\newtheorem{lemma}[theorem]{Lemma}
\newtheorem{corollary}[theorem]{Corollary}
\newtheorem{definition}[theorem]{Definition}
\newtheorem{proposition}[theorem]{Proposition}
\newcommand{\mc}{\mathcal}
\newcommand{\mb}{\mathbf}
\newcommand{\mbb}{\mathbb}
\newcommand{\mr}{\mathrm}
\renewcommand{\ker}[1]{\mathrm{ker}(#1)}
\newcommand{\im}[1]{\mathrm{im}(#1)}
\newcommand{\QAB}{\mathcal{Q}_{\mathrm{AB}}}
\newcommand{\QBC}{\mathcal{Q}_{\mathrm{BC}}}
\newcommand{\QAC}{\mathcal{Q}_{\mathrm{AC}}}
\newcommand{\QABC}{\mathcal{Q}_{\mathrm{ABC}}}
\newcommand{\tQAB}{\tilde{\mathcal{Q}}_{\mathrm{AB}}}
\begin{document}
\title{
Transversal Dimension Jump for Product qLDPC Codes
}

\author{Christine Li}
\affiliation{Institute for Quantum Information and Matter, Caltech, Pasadena, CA 91125, USA}
\affiliation{Fu Foundation School of Engineering and Applied Science, Columbia University, New York, NY 10027, USA
}

\author{John Preskill}
\email{preskill@caltech.edu}
\affiliation{Institute for Quantum Information and Matter, Caltech, Pasadena, CA 91125, USA}
\affiliation{AWS Center for Quantum Computing, Pasadena, CA 91125, USA}

\author{Qian Xu}
\email{qianxu@caltech.edu}
\affiliation{Institute for Quantum Information and Matter, Caltech, Pasadena, CA 91125, USA}
\affiliation{Walter Burke Institute for Theoretical Physics, Caltech, Pasadena, CA 91125, USA}

\begin{abstract}
We introduce \emph{transversal dimension jump}, a code-switching protocol for lifted product (LP) quantum low-density parity-check (qLDPC) codes across different chain-complex dimensions, enabling universal fault-tolerant quantum computation with low overhead. The construction leverages the product structure of LP codes to implement one-way transversal CNOTs between a 3D code and its 2D component codes, enabling teleportation-based switching with geometrically nonlocal gates. 
Combined with constant-depth CCZ gates in 3D LP codes and low-overhead transversal Clifford gates in 2D LP codes, this yields universal, high-rate quantum logical computation with high thresholds and low space-time costs.
Beyond asymptotic schemes, we identify explicit 3D–2D LP code pairs supporting cup-product CCZ gates, including bivariate tricycle–bicycle families such as the $\llbracket 81,3,5\rrbracket$–$\llbracket 54,2,6\rrbracket$ pair, where the 3D tricycle codes admit depth-2 CCZ, weight-6 stabilizers, and pseudo-thresholds $\gtrsim 0.4\%$. 
As a byproduct, we show that the 3D codes enable highly efficient magic-state preparation: a single round of stabilizer measurements followed by depth-2 CCZ and postselection produces states with error $<10^{-9}$ and success probability $\sim 35\%$.
Our results establish a native integration of qLDPC codes with complementary transversal gates—covering nearly all practically relevant families known so far—and open a broad design space for scalable, low-overhead universal quantum computation.
\end{abstract}

\maketitle

\section{Introduction}
Quantum low-density parity-check (qLDPC) codes are strong candidates for scalable fault-tolerant quantum computation (FTQC) thanks to their low encoding overhead~\cite{gottesman2013fault, breuckmann2021quantum, panteleev2022asymptotically, leverrier2022quantum, gu2022efficient, dinur2022good, lin2022good, bravyi_high-threshold_2024, xu2024constant}.
Although low-overhead Clifford gates have been constructed for several practically relevant qLDPC families~\cite{cohen2022low, cross2024improved, xu2025fast, malcolm2025computing}, efficient implementations of non-Clifford gates in these regimes remain elusive. Leading approaches still rely on interfacing with low-rate topological codes to prepare high-fidelity magic states~\cite{yoder2025tour, zhang2025constant}, incurring large space-time overhead for computations with heavy non-Clifford demands. To unlock the genuine resource savings promised by qLDPC codes, it is therefore essential to develop native implementations of both Clifford and non-Clifford gates---ideally transversal ones~\cite{zhu2023non, lin2024transversal, golowich2025asymptotically, zhu2025topological, he2025quantum}.

Recent no-go results~\cite{fu2025no}, analogous to the Bravyi–König theorem for topological codes~\cite{bravyi2013classification}, 
show that certain qLDPC codes require a chain-complex dimension of at least three to support transversal (or constant-depth) non-Clifford gates. 
At the same time, in compliance with the Eastin–Knill theorem~\cite{eastin2009restrictions}, 2D codes remain necessary for complementary transversal Clifford gates. 
These insights motivate new schemes that enable dimension jumps between chain-complex dimensions --- analogous to dimension-jump protocols for topological codes in different spatial dimensions~\cite{bombin2016dimensional, heussen2024efficient, daguerre2025code} --- thereby enabling the combination of qLDPC codes with complementary native gates for universal, high-rate quantum fault tolerance.

In this work, we present a \emph{transversal dimension jump} scheme between different chain-complex dimensions that applies to a broad family of qLDPC codes --- the lifted product (LP) codes~\cite{eczoo_lifted_product, panteleev2021degenerate, panteleev2021quantum, breuckmann2021quantum}, which encompasses hypergraph product codes~\cite{tillich2014quantum}, quasi-cyclic lifted product codes~\cite{panteleev2021degenerate, panteleev2021quantum}, bivariate-bicycle codes~\cite{bravyi_high-threshold_2024}, abelian two-block group-algebra codes~\cite{kalachev2020minimum, wang2023abelian, lin2024quantum, eczoo_2bga}, etc. 
Our scheme is structural: given any 3D LP code $\QABC$ built from the lifted product of three classical codes $\mc{C}_A, \mc{C}_B, \mc{C}_C$, we can implement a one-way transversal logical CNOT circuit from $\QABC$ to any of its 2D component codes obtained by the lifted product of two of the three base classical codes (e.g. $\QAB$ from $\mc{C}_A$ and $\mc{C}_B$). Note that the code dimension here refers to the dimensionality of the resulting chain complex (see Appendix~\ref{sec:appx_background}), rather than to geometric locality.

Conceptually, as illustrated in Fig.~\ref{fig:overview}(a), the scheme works since the 3D code, arranged on a cube, can be viewed as stacks of multiple 2D slices, each structurally resembling a 2D component code. 
Utilizing the logical CNOT circuit, we can then perform teleportation-based code switching between the 3D code and multiple copies of the 2D component code \emph{in parallel} (see Fig.~\ref{fig:overview}(b-c)), when allowing geometrically nonlocal gates. 

This generic dimension-jump scheme yields powerful consequences: by pairing 3D LP codes supporting transversal/constant-depth CCZ gates with 2D LP codes supporting low-overhead Clifford gates, both with high encoding rates, and linking them via the dimension jump, one obtains universal fault-tolerant gatesets with low space-time costs.  
For example, 3D hypergraph product codes ---  the simplest instances of the LP codes --- with asymptotically constant rate and low-weight stabilizers~\cite{golowich2025quantum} can be switched to 2D HGP codes supporting addressable transversal Clifford gates~\cite{xu2025fast, berthusen2025automorphism, malcolm2025computing}, providing a high-rate transversal universal gate set. 

As case studies, we identify new instances of practically relevant codes, such as a family of finite-size bivariate tricycle codes---for example, the $\llbracket 81,3,5\rrbracket $ code---analogous to, yet outperforming, the trivariate tricycle codes in Ref.~\cite{jacob2025single}. These codes support depth-2 CCZ gates via cup-product constructions~\cite{breuckmann_cups_2024} and can switch to bivariate bicycle codes~\cite{bravyi_high-threshold_2024} through the transversal dimension jump. 
Remarkably, they feature low-weight ($6$) stabilizers, pseudo-thresholds about $0.4\%$, and large qubit savings ($\sim 5\times$) over standard 3D toric codes, when assuming no geometric locality constraints on quantum gates. 
As a byproduct, we show that, utilizing their 3D structure, these codes further enable a simple yet highly efficient magic-state preparation scheme: initialize logical $\ket{+}$ states, measure stabilizers once, apply the depth-2 CCZ circuit, and postselect using error detection. For the $\llbracket 81,3,5\rrbracket $ code, we estimate that this procedure yields magic states with (postselected) logical error rate below $10^{-9}$, success probability around $35\%$, and space–time overhead significantly lower than the best known cultivation protocols with color codes~\cite{gidney2024magic}.
Finally, we show that the higher-rate 3D codes of Ref.~\cite{menon2025magic} (e.g., the $\llbracket 108,6,6\rrbracket$ code), which support a modified cup-product CCZ circuit, can be readily paired and code-switched with their 2D component codes within our framework, highlighting the broad applicability of our scheme.

\section{Transversal dimension jump}
\begin{figure*}
    \centering
    \includegraphics[width=1\linewidth]{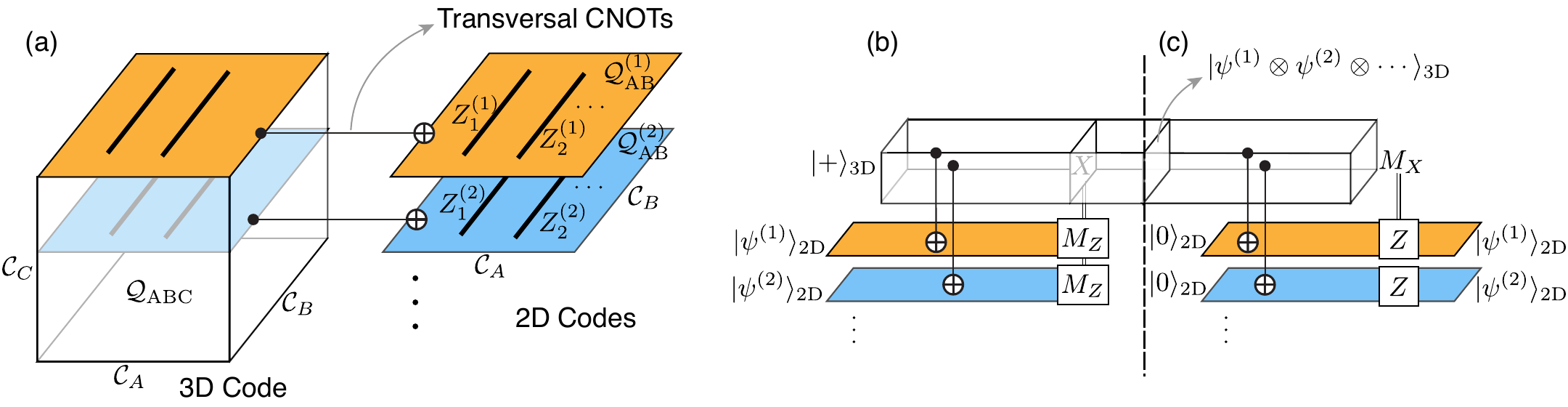}
    \caption{
    \textbf{Transversal dimension jump.}
    (a) A 3D LP code $\QABC$, defined as the lifted product of base classical codes $\mc{C}_A$, $\mc{C}_B$, and $\mc{C}_C$, can be visualized on a cube. Any of its 2D component codes, such as $\QAB$ (the product of two base codes), corresponds to a 2D slice of $\QABC$. 
    A logical CNOT circuit can be realized via transversal physical CNOTs between $\QAB$ and the slice of $\QABC$. 
    To see the logical action, the logical $Z$ operators of the 2D codes, indicated by the string-like operators, are mapped through the CNOT circuit to the corresponding $Z$ operators of the 3D code on the associated slice. 
    (b-c) Consequently, a subset of logical qubits of $\QABC$ can be teleported from (panel (b)) or to (panel (c)) one or more copies of the 2D codes in parallel, provided the CNOT circuits are also logically transversal (with one-to-one couplings between the logical qubits). The logical teleportation circuits are one-bit teleportation circuits that involve the transversal CNOTs, transversal logical initialization in the $X$ (resp. $Z$) basis, transversal measurements in the $Z$ (resp. $X$) basis --- denoted by $M_Z$ (resp. $M_X$) --- and corresponding Pauli $X$ (resp. $Z$) feedback corrections.
    }
    \label{fig:overview}
\end{figure*}
Here, we introduce in detail the \emph{transversal dimension-jump scheme} for a 3D LP code and its 2D component codes.

Let $G$ be a finite abelian group of order $l$, and let $R = \mathbb{F}_2[G]$ denote its group algebra. A classical code $\mc{C}$ over $R$ can be expressed as a $1$-chain complex
\begin{equation}
\mc{C}: 
    \begin{tikzcd} 
    C_1 = R^n & C_0 = R^r
\arrow["{H \in R^{r\times n}}", from=1-1, to=1-2],
\end{tikzcd}
\end{equation}
where $C_1$ and $C_0$ are associated the checks and bits of $\mc{C}$, respectively, and $H$ is the check matrix.  

Since $R \simeq \mathbb{F}_2^l$ as an $\mathbb{F}_2$-vector space with basis $\{g\}_{g \in G}$, the code admits an equivalent binary representation
$
\mc{C}:\quad 
    \begin{tikzcd}[column sep=4.5em]
    C_1 = \mathbb{F}_2^{nl} \arrow[r, "\mathbb{B}(H) \in \mathbb{F}_2^{rl\times nl}"] & C_0 = \mathbb{F}_2^{rl},
\end{tikzcd}
$
where the binary representation of a matrix $M$ over $R$, $\mbb{B}(M)$, is given by replacing each entry $M_{i, j} \in R$ with its faithful representation $\mbb{B}(M_{i,j}) \in \mbb{F}_2^{l\times l}$.
This $1$-complex $\mc{C}$ can be regarded as a 1D LP code.

The tensor product of two such classical codes $\mc{C}_A$ and $\mc{C}_B$ over $R$ yields a 2D LP code $\QAB = \mathrm{LP}(\mc{C}_A,\mc{C}_B)$ with a $2$-chain complex
\begin{equation}
\begin{tikzcd} 
    \QAB: Q_2^{\mr{AB}} & Q_1^{\mr{AB}} & Q_0^{\mr{AB}}
\arrow["\partial_2^{\mr{AB}}", from=1-1, to=1-2]
\arrow["\partial_1^{\mr{AB}}", from=1-2, to=1-3],
\end{tikzcd}
\end{equation}
where $Q_\ell^{\mr{AB}} = \bigoplus_{i+j=\ell} C_i^A \otimes_R C_j^B$ for $\ell=0,1,2,$ are associated with $X$-checks, qubits, and $Z$-checks, respectively. The tensor product is over $R$. 
The boundary maps act by $
\partial_\ell^{\mr{AB}}(a\otimes_R b) = (H_A a)\otimes_R b + a\otimes_R (H_B b)$ for $a \in C_i^A$ and $b \in C_j^B$ for $\ell=1,2$, forming the $X$ check matrix and the transpose of the $Z$-check matrix, respectively. Note that, we can again obtain the binary form of the check matrices using their binary representations.

Iterating once more, we obtain a 3D LP code $\QABC = \mathrm{LP}(\QAB,\mc{C}_C)$ with a $3$-chain complex:
\begin{equation}
    \begin{tikzcd} 
    \QABC: Q_3^{\mr{ABC}} & Q_2^{\mr{ABC}} & Q_1^{\mr{ABC}} & Q_0^{\mr{ABC}}
\arrow["{\partial_3^{\mr{ABC}}}", from=1-1, to=1-2]
\arrow["{\partial_2^{\mr{ABC}}}", from=1-2, to=1-3],
\arrow["{\partial_1^{\mr{ABC}}}", from=1-3, to=1-4]
\end{tikzcd}
\label{eq:3D_chain_complex}
\end{equation}
where $Q_\ell^{\mr{ABC}} = \bigoplus_{i+j=\ell} Q_i^{\mr{AB}}\otimes_R C_j^C$ for $\ell=0,1,2,3,$ are associated with $X$ checks, qubits, $Z$ checks, and meta-$Z$ checks (redundancy of the $Z$ checks), respectively. 
The boundary maps are given by $
\partial_\ell^{\mr{ABC}}(q\otimes_R c) = \partial_i^{\mr{AB}}(q)\otimes_R c + q\otimes_R H_C(c)$ for $q \in Q_i^{\mr{AB}}$ and $c \in C_j^C$ for $\ell=1,2,3,$ forming the $X$ check matrix, the transpose of the $Z$ check matrix, and the transpose of the meta-$Z$ check matrix, respectively.
See Appendix~\ref{sec:appx_codedetails} for details of these codes.

By symmetry, $\QABC$ can also be realized as $\mathrm{LP}(\QAC,\mc{C}_B)$ or $\mathrm{LP}(\QBC,\mc{C}_A)$, or equivalently as the triple lifted product $\mathrm{LP}(\mc{C}_A,\mc{C}_B,\mc{C}_C)$~\cite{breuckmann_quantum_2021}.
For this reason, we refer to $\{\QAB,\QAC,\QBC\}$ as the \emph{2D component codes} of $\QABC$.

Each 2D component code naturally embeds into $\QABC$. For instance,
\begin{equation}
    \QAB \simeq \tilde{Q}_{\mr{AB}} = \mathrm{LP}(\QAB,\mc{C}_0) \subset \QABC,
\end{equation}
where $\mc{C}_0: 0\to R^1$ is a trivial subcomplex of $\mc{C}_C$. Thus both the qubits and the checks of $\tilde{Q}_{\mr{AB}}$ can be viewed as subsets of those of $\QABC$ (see Fig.~\ref{fig:overview}(a)). This inclusion is expressed by a chain map
\begin{equation}
    \gamma:\ \tilde{Q}_{\mr{AB}}\hookrightarrow \QABC,
\end{equation}
consisting of $R$-linear maps $\gamma_i:\tilde{Q}_i^{\mr{AB}}\to Q_i^{\mr{ABC}}$ commuting with the boundary maps of $\tilde{Q}_{\mr{AB}}$ and $\QABC$. 
Different choices of $\gamma$ correspond to embedding $\tilde{Q}_{\mr{AB}}$ into different subsets of $\QABC$.

We now show that a physical CNOT circuit specified by $\gamma_1$ induces a \emph{logical} CNOT circuit between $\tilde{Q}_{\mr{AB}}$ and $\QABC$.
Specifically, we apply a physical CNOT controlled by the $i$-th qubit of $\QABC$ and targeting the $j$-th qubit of $\tilde{Q}_{\mr{AB}}$ whenever $\mathbb{B}(\gamma_1)_{i,j}=1$. 
By Refs.~\cite{huang2022homomorphic,xu2025fast}, such a circuit implements a valid logical CNOT circuit if and only if $\gamma$ is a chain map (also called homomorphism) ---guaranteed here by construction--- and is termed homomorphic CNOT:

\begin{lemma}[Homomorphic CNOT]
    For any 3D LP code $\QABC=\mathrm{LP}(\mc{C}_A,\mc{C}_B,\mc{C}_C)$, there exists a logical CNOT circuit controlled by $\QABC$ and targeting any of its 2D component codes, implemented entirely via transversal physical CNOTs.
    \label{lemma:homo_CNOT}
\end{lemma}
See Appendix~\ref{sec:appx_proofhomomorphiccnot} for a detailed proof. 
The homomorphic CNOT is directional: it can be controlled by the 3D code but not the other way around. Moreover, it is physically transversal, as each physical qubit interacts with at most one partner qubit due to the inclusion-map structure of $\gamma_1$. We therefore also refer to it as a one-way transversal CNOT, a construction that is inherently fault-tolerant.

At the logical level, $\gamma_1$ induces a map on first homologies
\begin{equation}
    \bar{\gamma}_1:H_1(\tilde{Q}_{\mr{AB}})\longrightarrow H_1(\QABC),
\end{equation}
which specifies how logical $Z$ operators of $\tilde{Q}_{\mr{AB}}$ are mapped to those of $\QABC$ through the logical CNOT circuit (see, e.g. the string-like logical $Z$ operators in Fig.~\ref{fig:overview}(a)). Concretely, given a basis 
of $k_{\mr{AB}}$ and $k_{\mr{ABC}}$ logical $Z$ operators for $H_1(\tilde{\mc{Q}}_{\mr{AB}})$ and $H_1(\QABC)$ respectively, the induced map is represented by a matrix $\bar{\gamma}_1\in \mathbb{F}_2^{k_{\mr{ABC}}\times k_{\mr{AB}}}$. 
A logical CNOT controlled by the $i$-th logical qubit of $\QABC$ and targeting the $j$-th of $\tilde{Q}_{\mr{AB}}$ is thus applied whenever $\bar{\gamma}_{1_{i,j}}=1$.

If $\bar{\gamma}_1$ is injective, which implies that $\bar{\gamma}_1$ can be written in a transversal form (in certain logical basis)
\begin{equation}
    \bar{\gamma}_1=\begin{pmatrix} I_{k_{\mr{AB}}} \\ 0_{(k_{\mr{ABC}} - k_{\mr{AB}})\times k_{\mr{AB}}}\end{pmatrix},\qquad (k_{\mr{AB}}\leq k_{\mr{ABC}}),
\end{equation}
then every logical qubit of the 2D code couples distinctly to one of the 3D code --- the homomorphic CNOT circuit is also \emph{logically transversal}. 
This provides a sufficient condition for a transversal dimension-jump scheme: 
by combining such a transversal logical CNOT circuit with the logical teleportation circuit of Fig.~\ref{fig:overview}(b–c), we can teleport logical qubits between $\QABC$ and $\tilde{Q}_{\mr{AB}}$.

\begin{theorem}[Transversal dimension jump]
    Given any 3D LP code $\QABC = \mr{LP}(\mc{C}_A, \mc{C}_B, \mc{C}_C)$, one can teleport a subset of its logical qubits to or from a 2D component code in parallel using the homomorphic CNOT of Lemma~\ref{lemma:homo_CNOT}, provided they are logically transversal.
    \label{th:transversal_dim_jump}
\end{theorem}

The sufficient condition for logical transversality---the injectivity of $\bar{\gamma}_1$---is mild. As shown in the next section, it holds for all HGP codes (with $R = \mbb{F}_2$) and for every nontrivial 3D LP code we identify in Table~\ref{table:codes} that supports low-depth non-Clifford gates.

Moreover, this transversal teleportation can be extended to many copies of a 2D component code $\mc{Q}_{\mr{2D}}^{(i)}$ simultaneously. 
This requires constructing physical CNOTs $\gamma_1^{(i)}: Q_1^{\mr{2D},(i)} \rightarrow Q_1^{\mr{ABC}}$ with disjoint images, 
so that each 2D code couples to a distinct subset of the 3D code, as illustrated in Fig.~\ref{fig:overview}(a).

\section{Case studies \label{sec:case_studies}}
We now present case studies of several subfamilies of LP codes, demonstrating that the transversal dimension jump (Theorem~\ref{th:transversal_dim_jump}) can be applied to a broad class of practically relevant high-rate 2D and 3D code pairs, all encoding more than one logical qubit. 

Our focus will be on 3D LP codes that admit low-depth logical CCZ circuits via the cup-product construction~\cite{breuckmann_cups_2024}. 
At a high level, given three copies of a 3D LP code $\{\mc{Q}_{\mr{3D}}^{(\alpha)}\}_{\alpha=1,2,3}$, a cross-block physical CCZ circuit is specified by a 3D tensor $\delta$. 
This circuit applies a physical CCZ gate to qubits $i$, $j$, and $k$ of $\mc{Q}_{\mr{3D}}^{(1)}$, $\mc{Q}_{\mr{3D}}^{(2)}$, and $\mc{Q}_{\mr{3D}}^{(3)}$, respectively, whenever $\delta_{i,j,k}=1$. 
The circuit $\delta$ defines a valid logical CCZ circuit if it induces a trilinear map on cohomologies, 
$
\bar{\delta}: H^1(\mc{Q}_{3D}^{(1)}) \times H^1(\mc{Q}_{3D}^{(2)}) \times H^1(\mc{Q}_{3D}^{(3)}) \rightarrow \mbb{F}_2
$.
Given a basis of the cohomologies (corresponding to $X$ logical operators), the tensor $\bar{\delta}$ specifies a logical CCZ circuit in the same way that $\delta$ specifies the physical one. 
The logical CCZ is nontrivial whenever $\bar{\delta}$ is not the identity tensor.

\begin{table}
\centering
\begin{tabular}{c|c|c|c|c}
\toprule
\textbf{Code} & \textbf{2D} $\llbracket n,k,d\rrbracket$ & \textbf{3D} $\llbracket n,k,d\rrbracket $ & \textbf{3D} $kd^3/n$ & \textbf{CCZ depth} \\
\midrule
Pentagon & $\llbracket 45, 7, 3\rrbracket $ & $\llbracket 180, 8, 3\rrbracket $ & 1.2 & 4 \\
Lifted toric & $\llbracket 16, 2, 4\rrbracket $ & $\llbracket 48, 3, 4\rrbracket $ & 4 & 2 \\
Lifted toric & $\llbracket 36, 2, 6\rrbracket $ & $\llbracket 162, 3, 6\rrbracket $ & 4 & 2 \\
BB/BT & $\llbracket 18, 2, 3\rrbracket $ & $\llbracket 27, 3, 3\rrbracket $  & 3 & 2 \\
BB/BT  & $\llbracket 30, 2, 5\rrbracket $ & $\llbracket 45, 3, 4\rrbracket $ & 4.3 & 2 \\
BB/BT  & $\llbracket 54, 2, 6\rrbracket $ & $\llbracket 81, 3, 5\rrbracket $ & 4.6 & 2 \\
TB/TT  & $\llbracket 140, 2, 8\rrbracket $ & $\llbracket 210, 3, 7\rrbracket $ & 4.9 & 2 \\
TB/TT \cite{menon2025magic} & $\llbracket 72,4,6\rrbracket$ & $\llbracket 108, 6, 6\rrbracket$ & 12 & 8 \\
TB/TT \cite{menon2025magic} & $\llbracket 160,4,8\rrbracket$ & $\llbracket 240, 6, 8\rrbracket$ & 12.8 & 8 \\
\bottomrule
\end{tabular}
\caption{\textbf{List of 2D–3D code pairs switchable via transversal dimension jump}. For each 3D code, we report its qubit savings $kd^3/n$ relative to the standard 3D toric code and the physical CCZ depth for implementing the logical CCZ circuit via the cup-product construction~\cite{breuckmann_cups_2024}. 
Let $(r,c)$ denote the row- and column-weight of a matrix. All (but the Pentagon) 2D codes have $(4, 2)$ ($(6, 4)$) check matrices; all (but the Pentagon) 3D codes have $(4, 2)$ ($(6, 4))$ $X$ check matrices and $(6, 4)$ ($(8, 6)$) $Z$ check matrices. See Appendix~\ref{sec:appx_codedetails} for the explicit constructions of these codes. 
The 3D TT codes in the last two code pairs are constructed in Ref.~\cite{menon2025magic} using a modified, more symmetric form of the cup-product CCZ circuits, which enables higher encoding rates.
}
\label{table:codes}
\end{table}

\subsection{Hypergraph product codes}
Taking $G$ to be the trivial group with only the identity element gives $R \simeq \mbb{F}_2$ and recovers the hypergraph product (HGP) codes~\cite{tillich_quantum_2013}. 

For any 3D HGP code $\QABC = \mr{HGP}(\mc{C}_A, \mc{C}_B, \mc{C}_C)$ with three base codes over $\mbb{F}_2$, we show that transversal teleportation can be performed to (multiple copies of) any of its 2D component codes.  

Without loss of generality, consider the component code $\QAB = \mr{HGP}(\mc{C}_A, \mc{C}_B)$ and write $\QABC = \mr{HGP}(\QAB, \mc{C}_C)$. By the Künneth formula~\cite{hatcher2005algebraic}, the first homology of $\QABC$ can be derived from those of $\QAB$ and $\mc{C}_C$:
\begin{equation}
    H_1(\QABC) = H_1(\QAB)\otimes H_0(\mc{C}_C) \oplus H_0(\QAB)\otimes H_1(\mc{C}_C),
    \label{eq:HGP_kunneth}
\end{equation}
where $H_1(\mc{C}_C)$ (resp. $H_0(\mc{C}_C)$) denotes the kernel (resp. cokernel) of $H_C \in \mbb{F}_2^{r_C\times n_C}$ --- the check matrix of $\mc{C}_C$.  

Up to column permutations, we can decompose $H_C^T$ as 
$H_C^T = \big(H_C^{T,0},\, H_C^{T,1}\big)$,
where $H_C^{T,0} \in \mbb{F}_2^{n_C \times k_C^T}$, and $H_C^{T,1} \in \mbb{F}_2^{n_C \times (r_C - k_C^T)}$ have linearly independent columns, where $k_C^T$ denotes the dimension of the transpose code $H_C^T$. Now, we have
\begin{equation}
    H_0(\mc{C}_C) = \mr{rs}^{\bullet}(H_C^T) = \mr{span}\{e^{r_C}_i\}_{i \in [k_C^T]},
\end{equation}
where $\mr{rs}^{\bullet}$ denotes the complementary space of the row space and $e^{r_C}_i$ denotes a unit column vector in $\mbb{F}_2^{r_C}$ with the $i$-th entry being $1$.
Substituting into Eq.~\eqref{eq:HGP_kunneth}, the first sector of $H_1(\QABC)$ simplifies to  
\begin{equation}
    H_1(\QAB)\otimes H_0(\mc{C}_C) 
    = \mr{span}\{ H_1(\QAB)\otimes e^{r_C}_i \}_{i \in [k_C^T]},
\end{equation}
which shows that the first-sector logical $Z$ operators of $\QABC$ are supported on $k_C^T$ disjoint slices of its physical qubits, each containing an identical copy of the $k_{\rm AB}$ logical operators of $\QAB$ (see Fig.~\ref{fig:overview}(a)).  

Now, for each copy $\QAB^{(i)}$ with $i \in [k_C^T]$, choosing 
\begin{equation}
    \gamma_1^{(i)} = \left( 
    \begin{array}{c}
         I_{n_{\rm AB}} \otimes e^{r_C}_i  \\
         0
    \end{array} \right):
    Q_1^{\mr{AB, (i)}} \rightarrow Q_1^{\mr{ABC}} = Q_1^{\mr{AB}}\otimes C_0^C \oplus \cdots,
\end{equation}
which transversally couples the $n_{\mr{AB}} = \mr{dim}(Q^{\mr{AB}}_1)$ physical qubits of $\QAB^{(i)}$ to the $i$-th slice of the first-sector qubits of $\QABC$, yields an injective logical map $\bar{\gamma}_1^{(i)}: H_1(\QAB^{(i)}) \to H_1(\QABC)$, corresponding to logically transversal CNOTs.
Moreover, since the images of the $\{\gamma_1^{(i)}\}$ are disjoint, corresponding to coupling different 2D codes to different slices of the 3D code, transversal teleportation can be performed between $\QABC$ and up to $k_C^T$ 2D component codes in parallel.  

As a concrete example, let $H_A$ and $H_C$ be the $[3,1,3]$ repetition code, represented on a cycle graph with checks at vertices and bits on edges. Let $H_B$ be a $[10,6,3]$ code embedded on a fully connected graph with five vertices and ten edges, which can be visualized as a pentagon.
These choices yield a $\llbracket 180, 8, 3 \rrbracket$ 3D code $\QABC$ and a $\llbracket 45, 7, 3 \rrbracket$ 2D code $\QAB$ (the Pentagon codes in Table~\ref{table:codes}). 
Since $H_C^T$ is again the $[3,1,3]$ repetition code, we have $k_C^T = 1$, and thus only a single copy of $\QAB$ can be coupled to $\QABC$ in this case.  

Finally, because the classical base codes are defined on graphs, which form simplicial complexes, they inherit a natural cup product. This cup product can be lifted to the quantum code via tensor products~\cite{breuckmann_cups_2024}, enabling the implementation of logical CCZ gates by a physical CCZ circuit of depth $4$.

\subsection{Univariate lifted product codes}
Taking \(G = C_l\), the cyclic group of order \(l\), we have \(R \simeq \mathbb{F}_2[x]/(x^l + 1)\), yielding LP codes over (quotient) univariate polynomials. This family is conventionally referred to as quasi-cyclic LP codes~\cite{panteleev2021degenerate, panteleev2021quantum}.

One systematic way to obtain such univariate LP codes is by ``lifting": starting from a parental HGP code with binary classical check matrices and replacing their binary entries with univariate polynomials. 
A natural strategy to construct univariate LP codes with cup-product CCZ gates is to lift a parental HGP code that already supports such a gate.

As concrete examples, Table~\ref{table:codes} lists two 3D univariate LP codes with parameters \( \llbracket 48, 3, 4\rrbracket \) and \( \llbracket 162, 3, 6\rrbracket \), obtained by lifting the \( \llbracket 24, 3, 2\rrbracket \) and \( \llbracket 81, 3, 3\rrbracket \) 3D toric codes with base repetition codes over \(\mathbb{F}_2[x]/(x^2 + 1)\), respectively. 
For instance, choosing
$H_A = H_B = H_C =
\begin{pmatrix}
1 & x \\[2pt]
1 & 1
\end{pmatrix}
$
yields the \( \llbracket 48, 3, 4\rrbracket \) code.

Here, the lifting operation doubles the code distance while enlarging the code size only by a factor of $2$---in contrast to a factor of $8$ for standard 3D toric codes---yielding a $4\times$ saving in physical qubits. As shown in Table~\ref{table:codes}, these lifted toric codes can couple to 2D component codes with parameters also superior to those of corresponding 2D surface or toric codes via logically transversal homomorphic CNOTs, as verified numerically.

Finally, these 3D codes support a depth‑2 logical CCZ circuit via a depth‑2 physical CCZ circuit---matching that for standard 3D toric codes.

\subsection{Multivariate bicycle/tricycle codes}
Taking \(G\) to be a direct product of cyclic groups, we have \(R = \mathbb{F}_2[G]\) isomorphic to multivariate polynomial rings.  
For example, \(G = C_{l_x} \times C_{l_y}\) (resp. \(C_{l_x} \times C_{l_y} \times C_{l_z}\)) gives
$R_{l_x,l_y} \simeq \mathbb{F}_2[x,y]/(x^{l_x}+1, y^{l_y}+1)$ (resp. $R_{l_x,l_y,l_z} \simeq \mathbb{F}_2[x,y,z]/(x^{l_x}+1,y^{l_y}+1,z^{l_z}+1)
$).

Choosing base check matrices \(H_A=a, H_B=b, H_C=c\), where $a, b, c \in R_{l_x, l_y}$ (resp. $R_{l_x, l_y, l_z}$) yields bivariate (resp. trivariate) bicycle (2D) and tricycle (3D) codes, which we refer to as BB/BT/TB/TT codes for short.
Note that this family covers many BB~\cite{bravyi_high-threshold_2024} and TT~\cite{jacob2025single} codes that have high encoding rates and promising performance.

Our homomorphic CNOTs in Lemma~\ref{lemma:homo_CNOT} then couples bicycle-tricycle code pairs over the same polynomial ring $R$. 
As detailed in Appendix~\ref{sec:appx_inclusionmaplogicaltransversality}, we show that taking the homomorphic CNOTs
\begin{equation}
    \gamma_1 = \left( \begin{array}{c}
     I_{n_{\mr{AB}}} \\
     0
\end{array}\right): Q_1^{\mr{AB}} \rightarrow Q_1^{\mr{ABC}} \simeq Q_1^{\rm{AB}}\oplus Q_0^{\mr{AB}},
\end{equation}
which transversally couples the physical qubits of the 2D code with those of the first sector of the 3D code,
the induced logical CNOTs $\bar{\gamma}_1: H_1(\QAB) \rightarrow H_1(\QABC)$ are then injective if:
(1) the group order $|G|$ is odd, and (2) $c \in (a, b)$ over $R$.
Here, $n_{\mr{AB}} = \mr{dim}(Q^{\mr{AB}}_1) = 2$ (as $R$-modules).

As an explicit case, choosing 
\begin{equation}
    c = x+x^2y,\ a = x^2y+x^2y^2,\ b=1+xy^2,
\end{equation}
over a bivariate polynomial ring $R_{3, 3}$ gives the \(\llbracket 27,3,3\rrbracket\) code in Table~\ref{table:codes}.
Since \(c=x^2a+x^2b\in(a,b)\), the \(\llbracket 18,2,3\rrbracket\) 2D component code \(\QAB\) couples transversally to two of the three logical qubits of \(\QABC\).

All multivariate tricycle codes we constructed with binomial polynomials \(a,b,c\) support depth‑2 cup‑product CCZ gates~\cite{breuckmann_cups_2024, jacob2025single}, analogous to standard 3D toric codes.
We have numerically confirmed that the logical CCZ circuits for all the tricycle codes in Table~\ref{table:codes} are nontrivial. We also included higher distance examples of trivariate tricycle codes from \cite{menon2025magic} in the last two rows of Table~\ref{table:codes}. These codes are constructed using the symmetric triple cup product, analogous to the method from \cite{breuckmann_cups_2024}, to also have constant-depth logical CCZ while having higher encoding rates.
For each of these 3D TT codes, we can pair it with a 2D component TB code (listed in Table~\ref{table:codes}), between which the logical CNOTs are transversal, and our transversal dimension-jump protocol is therefore applicable, as verified numerically.
See Appendix~\ref{sec:appx_cupproduct} for an outline of the construction of these codes and Ref.~\cite{menon2025magic} for details and more instances of 3D codes. 

\section{Numerical simulation \label{sec:numerics}}
\begin{figure}
    \centering
    \includegraphics[width = 0.5\textwidth]{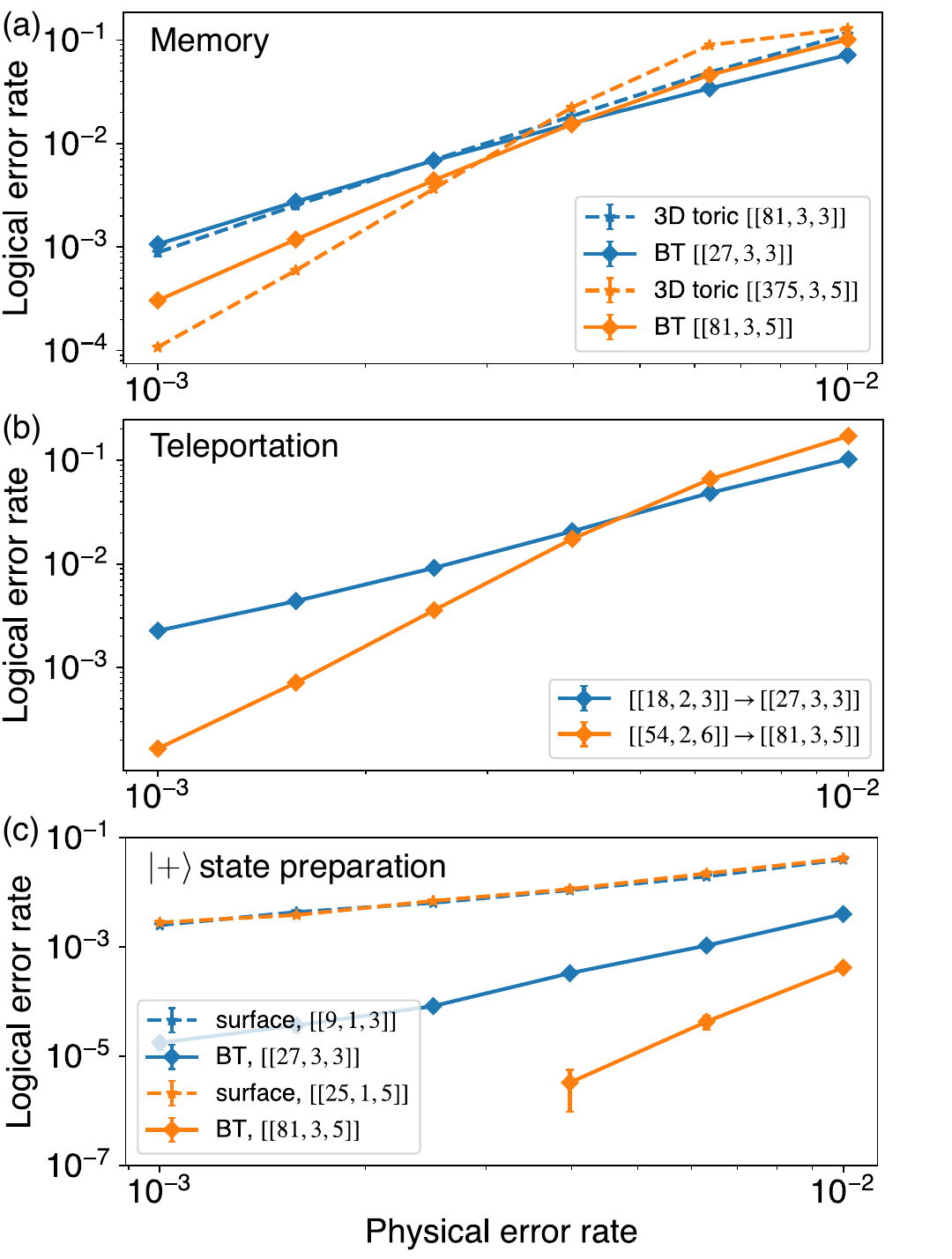}
    \caption{\textbf{Logical error rates of 3D codes and the transversal dimension jump circuit}. 
    (a) Memory logical error rate per logical qubit per code cycle for the BT codes in Table~\ref{table:codes}, compared against standard 3D toric codes. 
    (b) Logical error rate per logical qubit for the transversal teleportation circuit in Fig.~\ref{fig:overview}(b) using the BB/BT code pairs in Table~\ref{table:codes}. 
    (c) Logical error rate per logical qubit for single-shot $|+\rangle$ state preparation using the BT codes, compared against 2D surface codes. }
    \label{fig:simulations}
\end{figure}

\begin{table}
\centering
\begin{tabular}{c|c|c|c|c}
\toprule
Code & Parameters & Success rate & LER & Space-time cost \\
\midrule
BT & $\llbracket 27, 3, 3\rrbracket $ & $0.7$ & $1.(9) \times 10^{-6}$ & $239$ \\
BT  & $\llbracket 81, 3, 5\rrbracket $ & $0.35$ & $2.(3)\times 10^{-10}$ & $1444$ \\
Toric & $\llbracket 81, 3, 3\rrbracket $ & $0.35$ & $1.(9)\times 10^{-6}$ & $1444$ \\
Toric & $\llbracket 375, 3, 5\rrbracket $ & $0.008$ & $\backslash$ & $291667$ \\
\midrule
Color & $\llbracket 7,1,3\rrbracket$ & 0.65 & $6\times 10^{-7}$ & $540$ \\
Color & $\llbracket 19,1,5\rrbracket$ & 0.15 & $6\times 10^{-10}$ & $24000$ \\
\bottomrule
\end{tabular}
\caption{
\textbf{Magic state preparation with 3D codes and error detection.}
We report the success rate, postselected logical error rate (LER), and space‑time cost (qubit $\times$ expected circuit depth) per logical qubit for preparing magic states via a depth‑2 CCZ circuit applied to three copies of single‑shot prepared logical $|+\rangle$ states of BT codes, assuming a $10^{-3}$ physical error rate. Results are compared to the same protocol using standard 3D toric codes and the magic‑state cultivation approach for 2D color codes~\cite{gidney2024magic}. See Appendix~\ref{sec:appx_calcspacetimecost} for details on calculation of space-time cost.
}
\label{table:magic_states}
\end{table}

Here, we present numerical simulations of the bicycle/tricycle codes listed in Table~\ref{table:codes} under a standard circuit-level depolarizing noise model excluding idling errors.
In addition, we assume no geometric locality constraints on quantum gates. The simulation software used in this work are available at \href{https://github.com/tinghui2012/transversal-dimension-jump/}{Github}.

First, as shown in Fig.~\ref{fig:simulations}(a), 
the 3D tricycle codes exhibit high memory (pseudo-)thresholds of about \(0.4\%\), matching that for standard 3D toric codes.  
They are decoded efficiently using a minimum‑weight perfect matching decoder for the dominant \(Z\)-type errors, since these codes have biased distances \(d_X \gg d_Z\) and $Z$ errors produce point‑like syndromes.

Second, as shown in Fig.~\ref{fig:simulations}(b), the transversal implementation of the homomorphic CNOTs---being inherently fault‑tolerant---ensures that the 2D–3D teleportation‑based code switching (circuit in Fig.~\ref{fig:overview}(b)) preserves both the high threshold and the low error rates of the underlying 3D codes.  
These results indicate that, since the CCZ circuit is shallow (depth-$2$) and introduces only limited error propagation (as shown in Table~\ref{table:magic_states}), the performance of universal computations—both in threshold and logical error rate—is well captured by the memory performance of the 3D codes, provided sufficiently high-fidelity Clifford gates on the 2D codes.

Finally, as a byproduct, we consider new magic‑state preparation protocols by applying a depth‑2 CCZ circuit to \(\overline{\ket{+}}^{\otimes 3k_{3\mr{D}}}\) on three copies of a 3D tricycle code \(\mathcal{Q}_{\mr{3D}}\), each encoding \(k_{\mr{3D}}\) logical qubits.
Note that similar protocols were presented and analyzed in Ref.~\cite{menon2025magic}.
Here, we focus on smaller codes with shallower CCZ circuits and error detection, which may be more relevant for near-term experiments.
As shown in Fig.~\ref{fig:simulations}(c), the logical $\ket{+}$ states can be prepared in single-shot by measuring only one round of $Z$ checks, with measurement errors corrected using the code’s meta-$Z$ checks enabled by its 3D structure (see Eq.~\ref{eq:3D_chain_complex}).
See also complementary works on single-shot aspects of the codes~\cite{jacob2025single, menon2025magic, tan2025single}.
To achieve high‑fidelity magic states with small codes, we perform error detection (postselection) on the circuit.  
As shown in Table~\ref{table:magic_states}, using small BT codes of distances \(3\) and \(5\) yields magic states with logical error rates \(\sim 10^{-6}\) and \(\sim 10^{-10}\), with high success rates of \(\sim 70\%\) and \(\sim 35\%\), respectively.  
These outperform standard 3D toric codes of the same distances due to their smaller block sizes; for example, a distance‑5 toric code would yield a negligible (\(\sim 1\%\)) success rate owing to its large block size.

Compared to the state‑of‑the‑art magic-state cultivation protocol using 2D color codes~\cite{gidney2024magic}, our approach achieves significantly lower space‑time cost per logical qubit for three reasons:  
1. Despite being 3D, our codes have comparable space overhead to the 2D topological codes in Ref.~\cite{gidney2024magic} thanks to their qLDPC structure;  
2. Magic states are prepared directly via transversal/low-depth non‑Clifford gates, avoiding repeated Clifford measurements with GHZ ancillas as in Ref.~\cite{gidney2024magic}, which require additional postselection and reduce success rates;  
3. Our protocol utilizes single‑shot state preparation, requiring only one code cycle, unlike multiple cycles (\(\sim d\)) in Ref.~\cite{gidney2024magic}.  

We note that these estimates are only approximate. Rather than simulating the full magic-state preparation with the physical CCZ gates, we only add approximate error channels resulting from the physical CCZ gates, e.g. correlated $Z$ errors propagated from $X$ errors before the CCZ circuit.
A full simulation is left for future work. We also note that the high-fidelity magic states considered here are stored in low-distance codes operating in error-detection mode; to enable fault-tolerant computation, they need to be transferred (``escaped''~\cite{gidney2024magic}) into large-distance codes, which we defer to future study.

\section{Conclusion and discussion}
In summary, we introduce a unified framework for transversal dimension jump in lifted product codes. 
By leveraging high-rate 3D and 2D LP codes with complementary sets of transversal logical gates and high thresholds, this paradigm offers a route to universal quantum computation that may bypass the need for costly magic-state distillation or cultivation protocols. 
While preparing this manuscript, we became aware of complementary works~\cite{tan2025single, golowich2025constant} that focus on analogous code-switching protocols more specifically for HGP codes and on achieving universal, single-shot gate sets. Together with our present work, and with recent advances in 3D code constructions featuring native non-Clifford gates~\cite{jacob2025single, menon2025magic}, these developments collectively open a broad design space for low-overhead fault tolerance, where judiciously chosen 3D–2D LP code pairs can enable scalable and practical architectures for transversal universal quantum computation.

In Appendix~\ref{sec:appx_codedetails}, we show that certain 3D LP codes --- such as the bivariate tricycle codes --- can, despite their 3D structure, be embedded on a 2D torus with translationally invariant checks, similar to the bivariate bicycle codes~\cite{bravyi_high-threshold_2024}. This observation points to the possibility of implementing these codes efficiently on platforms such as multi-layer superconducting hardware~\cite{bravyi_high-threshold_2024, yoder2025tour} or reconfigurable atom arrays~\cite{bluvstein2023logical, xu2024constant, viszlai2024efficient}.

\begin{acknowledgments}

We thank Margarita Davydova, Pablo Bonilla, Varun Menon, Rohan Mehta, Roland Farrell, Samuel Tan, Yifan Hong and Guo Zheng for helpful discussions. We acknowledge financial support from the U.S. Department of Energy, Office of Science, National Quantum Information Science Research Centers, Quantum Systems Accelerator, and the National Science Foundation (PHY-2317110).
C. L. is funded in part by Caltech’s Summer Undergraduate Research Fellowship (SURF).
Q.X. is funded in part by the Walter Burke Institute for Theoretical Physics at Caltech.
The Institute for Quantum Information and Matter is an NSF Physics Frontiers Center.

\end{acknowledgments}

\newpage
\begin{appendix}
\appendix 

\section{Further background on lifted product codes \label{sec:appx_background}}
Let $G$ be an abelian group with order $l$. Its group algebra over $\mbb{F}_2$ forms a ring $R = \mbb{F}_2[G]$.

\begin{definition}[Chain complex over $R$ and its binary representation]
    Let $\{A_i\}_{i = 0}^D$ be a collection of $R$-modules and $\{\partial^A_i: A_i \rightarrow A_{i - 1}\}_{i = 1}^D$ be a collection of $R$-linear maps satisfying $\partial^A_{i - 1} \partial^A_i = 0$. They form the following $D$-chain complex $\mc{A}$ over $R$:
    \begin{equation}
    \mc{A}: 
        \begin{tikzcd}[cells={nodes={minimum height=2em}}]
    A_D \arrow[r,"\partial^A_D"] & A_{D-1} \arrow [r,"\partial^A_{D - 1}"]  & \cdots \arrow[r, "\partial^A_2"] & A_1 \arrow[r,"\partial^A_1"] & A_0
    \end{tikzcd},
    \end{equation}
    which we also denote as $\mc{A}=\{\{A_i\}_i,\{\partial^{A}_i\}_i\}$. Since $R \simeq \mbb{F}_2^l$ can be viewed as a $\mbb{F}_2$ vector space with basis $\{g\}_{g \in G}$, we can equivalently express $\mc{A}$ as a chain complex over $\mbb{F}_2$:
    \begin{equation}
    \mbb{B}(\mc{A}): 
        \begin{tikzcd}[cells={nodes={minimum height=2em}}]
    \mbb{B}(A_D) \arrow[r,"\mbb{B}(\partial^A_D)"] & \mbb{B}(A_{D-1}) \arrow [r,"\mbb{B}(\partial^A_{D - 1})"]  & \cdots \arrow[r,"\mbb{B}(\partial^A_1)"] & \mbb{B}(A_0)
    \end{tikzcd},
    \end{equation}
    where $\mbb{B}(A_i) = \mbb{F}_2^{m_i l}$ with $m_i$ being the dimension of $A_i$ as a $R$-module, and $\mbb{B}(\partial^A_i) \in \mbb{F}_2^{m_{i - 1}l \times m_i l}$ is the binary representation of $\partial^A_i \in R^{m_{i - 1}\times m_i}$ by replacing each entry $(\partial^A_i)_{p, q}$ with its faithful representation $\mbb{B}((\partial^A_i)_{p, q}) \in \mbb{F}_2^{l \times l}$.  
\end{definition}

\begin{definition}[Classical linear codes]
    A classical linear code over $R$ with a check matrix $H \in R^{r_C\times n_C}$ can be represented as a 1-chain complex $\mc{C}$:
\[\begin{tikzcd}[cells={nodes={minimum height=2em}}]
\mc{C}: C_1 = R^{r_C} \arrow[r,"H"] & C_0 = R^{n_C}
\end{tikzcd}\]
where the basis of $C_1$ and $C_0$ are the bits and checks, respectively. Its binary representation $\mbb{B}(\mc{C})$ gives the standard binary format of the code with a check matrix $H \in \mbb{F}_2^{r_C l \times n_C l}$.
\end{definition}

\begin{definition}[CSS quantum codes]
    A CSS quantum code over $R$ with $X$ and $Z$ check matrix $H_X \in R^{r_X \times n}$ and $H_Z \in R^{r_Z \times n}$, respectively, can be represented as a 2-chain complex $\mc{Q}$
\[\begin{tikzcd}[cells={nodes={minimum height=2em}}]
\mc{Q}: Q_2 = R^{r_Z} \arrow[r,"H_Z^T"] & Q_1 = R^n \arrow[r,"H_X"] & Q_0 = R^{r_X}
\end{tikzcd}\]
where the basis of $Q_2, Q_1$ $Q_0$ are the Z-type checks, qubits, and X-type checks, respectively. Its binary representation $\mbb{B}(\mc{Q})$ gives the standard binary format of the quantum code (for qubits) with check matrices $H_X \in \mbb{F}_2^{r_Xl \times nl}$ and $H_Z \in \mbb{F}_2^{r_Z l \times nl}$.
\end{definition}

\begin{definition}[Tensor product of chain complexes over $R$-modules \cite{hatcher2005algebraic}]
\label{def:product_complex}
Let $\mc{A} = \{\{A_i\},\{\partial^{A}_i\}\}$ and $\mc{B} = \{\{B_j\},\{\partial^{B}_j\}\}$ be $D_A$- and $D_B$-chain complexes over $R$, respectively. Their tensor product $\mc{A}\otimes_R \mc{B}$ is defined by a $(D_A + D_B)$-chain complex over $R$ with modules 
$\{(\mc{A}\otimes_R \mc{B})_n := \otimes_{i+j=n} A_i\otimes_R B_j\}_{n = 0}^{D_A + D_B}$ and boundary maps $\{\partial^{\mr{AB}}_n: (\mc{A}\otimes_R \mc{B})_n \rightarrow (\mc{A}\otimes_R \mc{B})_{n - 1}\}_{n = 1}^{D_A + D_B}$, where $\partial^{\mr{AB}}_n$ is defined as
\begin{equation}
    \partial^{\mr{AB}}_n (a_i \otimes_R b_j ) = \partial^{A}_i(a_i)\otimes b_j + a_i\otimes \partial^{B}_j(b_j),
\end{equation}
for $a_i\in A_i$ and $b_j\in B_j$ (with $i + j = n$).
This is the same as the total complex of a double (product) complex, otherwise known as a product total complex~\cite{breuckmann_quantum_2021}.
\end{definition}
\noindent \textbf{Note:} For simplicity, we drop the subscript $R$ in $\otimes_R$ when it is clear from context that $\otimes$ is a tensor product of chain complexes over $R$-modules.

\begin{definition}[2D Lifted Product Code]
Let $\mc{A}:A_1 \xrightarrow{H_A} A_0$ and $\mc{B}:B_1 \xrightarrow{H_B} B_0$ be classical codes over $R$. A 2D lifted product (LP) code, $\QAB = \mr{LP}(\mc{A},\mc{B})$ is defined as the following product total complex $\mc{Q}_{\mr{AB}} = \mc{A}\otimes_R \mc{B}$:
\[
\begin{tikzcd}[column sep=1em] 
Q^{\mathrm{AB}}_0
  & 
  & C_0^A \otimes C_0^B
  & \\
Q^{\mathrm{AB}}_1 \arrow[u, "H^{\mathrm{AB}}_X"]
  & C_1^A \otimes C_0^B \arrow[ru, "H_A \otimes I"]
  &
  & C_0^A \otimes C_1^B \arrow[lu, "I \otimes H_B"'] \\
Q^{\mathrm{AB}}_2 \arrow[u, "(H^{\mathrm{AB}}_Z)^T"]
  &
  & C_1^A \otimes C_1^B
      \arrow[lu, "I \otimes H_B"]
      \arrow[ru, "H_A \otimes I"']
  &
\end{tikzcd}
\]

with the following boundary maps:
\begin{align}
\partial_1^{\mr{AB}} & = H_X^{\mr{AB}} =                 \begin{bmatrix}
        H_A \otimes I &I \otimes H_B
    \end{bmatrix},\\
\partial_2^{\mr{AB}} & = (H^{\mr{AB}}_Z)^{T} =            \begin{bmatrix}
        I \otimes H_B \\
        \hline 
        H_A \otimes I
    \end{bmatrix}.
\end{align}
The basis of $Q^{\mr{AB}}_2 = C_1^A\otimes C_1^B$ are the $Z$ checks, $Q^{\mr{AB}}_1 = (C_1^A\otimes C_0^B)\oplus(C_0^A\otimes C_1^B)$ are the qubits, and $Q^{\mr{AB}}_0 = C_0^A\otimes C_0^B$ are the $X$ checks.
\end{definition}

\begin{definition}[3D Lifted Product Code]
The tensor product of a 2D LP code $\QAB$ and another classical (1D) LP code $\mc{C}_C$ is a 3D LP code $\QABC=\mr{LP}(\mr{LP}(\mc{A},\mc{B}),\mc{C})=\mr{LP}(\mc{A},\mc{B},\mc{C})$ defined as the product total complex $\mc{Q}_{\mr{AB}}\otimes_R \mc{C}_C$: 

\begin{equation}
\begin{tikzcd}[column sep=0.2em]
Q^{\mr{ABC}}_0 &  & {\scriptstyle C_0^A\otimes C_0^B\otimes C_0^C} & \\
Q^{\mr{ABC}}_1 \arrow[u, "H^{\mr{ABC}}_X"] & {\scriptstyle C_1^A\otimes C_0^B\otimes C_0^C}  \arrow[ru]  & {\scriptstyle C_0^A\otimes C_1^B\otimes C_0^C}  \arrow[u]  & {\scriptstyle C_0^A\otimes C_0^B\otimes C_1^C}  \arrow[lu]  \\
Q^{\mr{ABC}}_2 \arrow[u, "(H^{\mr{ABC}}_Z)^T"]  & {\scriptstyle C_1^A\otimes C_1^B\otimes C_0^C}  \arrow[u] \arrow[ru] & {\scriptstyle C_1^A\otimes C_0^B\otimes C_1^C}  \arrow[lu] \arrow[ru] & {\scriptstyle C_0^A\otimes C_1^B\otimes C_1^C}  \arrow[u] \arrow[lu] \\
Q^{\mr{ABC}}_3 \arrow[u, "(M^{\mr{ABC}}_Z)^T"] & & {\scriptstyle C_1^A\otimes C_1^B\otimes C_1^C} \arrow[lu] \arrow[u] \arrow[ru] &                                                      
\end{tikzcd}
\label{eq:3D_complex}
\end{equation}
with the boundary maps:
\begin{equation}
    \begin{aligned}
\partial_1^{\mr{ABC}}  & = H^{\mr{ABC}}_X = \begin{bmatrix}
        \partial^{\mr{AB}}_1 \otimes I & I \otimes H_C
    \end{bmatrix} \\
    & = \begin{bmatrix}
        H_A \otimes I\otimes I & I \otimes H_B \otimes I &
        I\otimes I\otimes H_C
    \end{bmatrix},\\
\partial_2^{\mr{ABC}} & = (H^{\mr{ABC}}_Z)^T =          \begin{bmatrix}
        \partial_2^{\mr{AB}}\otimes I & I \otimes H_C \\
        0 & \partial_1^{\mr{AB}} \otimes I
    \end{bmatrix} \\
    & =\begin{bmatrix}
        I \otimes H_B \otimes I & I\otimes I \otimes H_C & 0\\
        H_A \otimes I \otimes I & 0 & I\otimes I \otimes H_C \\
        0& H_A \otimes I \otimes I & I \otimes H_B \otimes I
    \end{bmatrix},\\
\partial_3^{\mr{ABC}} 
    & = (M^{\mr{ABC}}_Z)^{T} 
     = \begin{bmatrix}
        I \otimes H_C \\
        \partial_2^{\mr{AB}} \otimes I
    \end{bmatrix} =\begin{bmatrix}
        I\otimes I \otimes H_C\\
        I\otimes H_B \otimes I\\
        H_A \otimes I \otimes I
    \end{bmatrix},
\end{aligned}
\label{eq:3D_check_mats}
\end{equation}
where $\partial_1^{\mr{ABC}}$, $\partial_2^{\mr{ABC}}$, and $\partial_3^{\mr{ABC}}$ give the X-check matrix $H_X^{\mr{ABC}}$, the transpose of the Z-check matrix $H_Z^{\mr{ABC}}$, and the transpose of the meta-Z check matrix $M_Z^{\mr{ABC}}$, respectively.

\end{definition}
Note that we have directly expressed the $3$-chain complex $\mc{Q}_{\mr{ABC}}$ as the product total complex of $\mc{C}_A \otimes_R \mc{C}_B \otimes_R \mc{C}_C$~\cite{breuckmann_quantum_2021}. We could equivalent write it as the product total complex of $\mc{Q}^{\mr{AB}}\otimes \mc{C}_C$, $\mc{Q}^{\mr{AC}}\otimes \mc{C}_B$, or $\mc{Q}^{\mr{BC}}\otimes \mc{C}_A$, following the definition in Def.~\ref{def:product_complex}.
\begin{definition}[Chain map]
    Given two $D$-chain complexes $\mc{A}=\{\{A_i\}_i,\{\partial^{A}_i\}_i\}$ and $\mc{B}=\{\{B_i\}_i,\{\partial^{B}_{i}\}_i\}$ over $R$, a chain map $\gamma:\mc{A}\rightarrow\mc{B}$ is a collection of $R$-linear maps $\{\gamma_i:A_i\rightarrow B_i\}_{i\in [D]}$ such for each $i\in[D]$, $\partial^{B}_i\gamma_i = \gamma_{i-1}\partial^{A}_i$; i.e., the following diagram commutes.
    \begin{equation}
        \begin{tikzcd}
{B_i} \arrow[r, "\partial^{B}_i"]                 & {B_{i-1}}                    \\
{A_i} \arrow[u, "\gamma_i"] \arrow[r, "\partial^{A}_i"] & {A_{i-1}} \arrow[u, "\gamma_{i-1}"]
\end{tikzcd}.
    \end{equation}
\end{definition}
\begin{definition}[Chain map between two 3D LP quantum codes]
    Let $\tilde{\mathcal{Q}}_{\mr{AB}}$ and $\mathcal{Q}_{\mr{ABC}}$ be 3-chain complexes over $R$, each representing a 3D LP code. A chain map $\gamma: \tilde{\mathcal{Q}}_{\mr{AB}}\rightarrow \mathcal{Q}_{\mr{ABC}}$ consists of $\gamma = \{\gamma_{0},\gamma_1,\gamma_2,\gamma_3\}$ such that the following diagram commutes: 
\begin{equation}\begin{tikzcd}[column sep=1.5em]
\mathcal{Q}_{\mr{ABC}}: &\hspace{-1em} Q_3^{\mr{ABC}} \arrow[r, "{\partial_3^{\mr{ABC}}}"]                              & Q_2^{\mr{ABC}} \arrow[r, "{\partial_2^{\mr{ABC}}}"]                           & Q_1^{ABC} \arrow[r, "{\partial_1^{\mr{ABC}}}"]                           & Q_0^{ABC}                          \\
\tilde{\mathcal{Q}}_{\mr{AB}}: &\hspace{-1em} \tilde{Q}^{\mr{AB}}_3 \arrow[r, "{\tilde{\partial}_3^{\mr{AB}}}"] \arrow[u, "\gamma_3"] & \tilde{Q}^{\mr{AB}}_2 \arrow[r, "{\tilde{\partial}_2^{\mr{AB}}}"] \arrow[u, "\gamma_2"] & \tilde{Q}^{\mr{AB}}_1 \arrow[r, "{\tilde{\partial}_1^{\mr{AB}}}"] \arrow[u, "\gamma_1"] & \tilde{Q}^{\mr{AB}}_0 \arrow[u, "\gamma_0"]
\end{tikzcd}
\end{equation}
\end{definition}

\begin{lemma}
\label{lemma:binary_chain_map}
    Let $\mc{A},\mc{B}$ be two $D$-chain complexes over $R$s. If $\gamma:\mc{A}\rightarrow \mc{B}$ is a chain map over $R$, then $\gamma = \{\gamma_i\}_{i \in [D]}$, a collection of $R$-linear maps, induces a chain map on the binary representation of the codes, $\mbb{B}(\gamma):\mbb{B}(\mc{A})\rightarrow \mbb{B}(\mc{B})$, where $\mbb{B}(\gamma) = \{\mbb{B}(\gamma_i)\}_{i \in [D]}$ is a collection of $\mbb{F}_2$-linear maps given by the binary representations of $\{\gamma_i\}$. 
\end{lemma}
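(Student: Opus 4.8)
The plan is to reduce the statement to the single algebraic fact that the binary-representation map $\mbb{B}$ is a ring homomorphism on matrices over $R$, and then transport the chain-map identity $\partial^{B}_i \gamma_i = \gamma_{i-1}\partial^{A}_i$ through it entry by entry.

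First I would recall the relevant bookkeeping: since each $A_i$ and $B_i$ is a free $R$-module, every $R$-linear map between them (in particular each $\gamma_i$ and each $\partial^{A}_i,\partial^{B}_i$) is represented by a matrix over $R$, composition of $R$-linear maps corresponds to matrix multiplication over $R$, and $\mbb{B}$ of such a matrix is the block matrix obtained by replacing each entry $r\in R$ by its faithful (left-regular) representation $\mbb{B}(r)\in\F^{l\times l}$. In particular $\mbb{B}(\gamma_i)\in\F^{m^B_i l\times m^A_i l}$ sends $\mbb{B}(A_i)=\F^{m^A_i l}$ to $\mbb{B}(B_i)=\F^{m^B_i l}$, so the dimensions of the induced maps are already correct.

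Second, the key lemma is that $\mbb{B}$ is additive and multiplicative: for conformable matrices $M,N$ over $R$, $\mbb{B}(M+N)=\mbb{B}(M)+\mbb{B}(N)$ and $\mbb{B}(MN)=\mbb{B}(M)\mbb{B}(N)$. Additivity is immediate entrywise. For multiplicativity I would first check it on $R$ itself: writing $r=\sum_{g\in G}r_g g$ and $s=\sum_{h\in G}s_h h$ with $r_g,s_h\in\F$, the faithful representation sends $g$ to the permutation matrix $P_g$ of multiplication by $g$ on $R\simeq\F^l$, these satisfy $P_gP_h=P_{gh}$, and extending $\F$-linearly gives $\mbb{B}(r)\mbb{B}(s)=\sum_{g,h}r_g s_h P_{gh}=\mbb{B}(rs)$ (only the regular-representation property is used here, not commutativity of $G$). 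Then for matrices, the $(p,q)$ block of $\mbb{B}(M)\mbb{B}(N)$ is $\sum_k \mbb{B}(M_{pk})\mbb{B}(N_{kq})=\mbb{B}\big(\sum_k M_{pk}N_{kq}\big)=\mbb{B}\big((MN)_{pq}\big)$ by additivity together with the scalar case, hence $\mbb{B}(M)\mbb{B}(N)=\mbb{B}(MN)$.

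Finally I would conclude: the chain-map hypothesis gives $\partial^{B}_i\gamma_i=\gamma_{i-1}\partial^{A}_i$ as matrices over $R$ for every $i\in[D]$; applying $\mbb{B}$ and using multiplicativity yields $\mbb{B}(\partial^{B}_i)\mbb{B}(\gamma_i)=\mbb{B}(\gamma_{i-1})\mbb{B}(\partial^{A}_i)$, which is exactly the commuting-square condition for $\mbb{B}(\gamma)=\{\mbb{B}(\gamma_i)\}_{i\in[D]}$ to be a chain map $\mbb{B}(\mc{A})\to\mbb{B}(\mc{B})$. The only genuine content is the multiplicativity of $\mbb{B}$, and even there I do not expect a real obstacle; the one mild subtlety worth spelling out is that ``faithful representation'' must be read as the (left-)regular representation, since it is precisely the identity $P_gP_h=P_{gh}$ that makes $\mbb{B}$ a homomorphism of (matrix) rings.
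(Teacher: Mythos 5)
Your proof takes essentially the same route as the paper's: write the chain-map condition $\partial^{B}_i\gamma_i=\gamma_{i-1}\partial^{A}_i$ as a matrix identity over $R$, apply $\mbb{B}$, and conclude. The one place you add genuine value is in carefully justifying that $\mbb{B}$ is \emph{multiplicative} (and not merely $\mbb{F}_2$-linear) on matrices over $R$, by reducing to the scalar case $P_gP_h=P_{gh}$ for the regular representation and then propagating through block-matrix multiplication. The paper's own proof invokes only ``$\mbb{B}(\cdot)$ is a linear map'' to pass from $\mbb{B}(\partial^B_i\gamma_i)=\mbb{B}(\gamma_{i-1}\partial^A_i)$ to $\mbb{B}(\partial^B_i)\mbb{B}(\gamma_i)=\mbb{B}(\gamma_{i-1})\mbb{B}(\partial^A_i)$, which is an underjustification: linearity of $\mbb{B}$ gives additivity but does not by itself say that $\mbb{B}$ of a product is the product of $\mbb{B}$'s. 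Your explicit check of the ring-homomorphism property (and your remark that faithfulness via the left-regular representation is exactly what makes this work) closes that gap cleanly; the two proofs are otherwise identical in structure.
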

\begin{proof}
    In order for $\mbb{B}(\mbb{\gamma})$ to be a valid chain map from $\mbb{B}(\mbb{\mc{A}})$ to $\mbb{B}(\mbb{\mc{B}})$, its suffices to show that $\mbb{B}(\partial_i^B)\mbb{B}(\gamma_i) = \mbb{B}(\gamma_{i-1})\mbb{B}(\partial_i^A)$ for each $i \in \{1,\ldots,D\}$, i.e. the following diagram commutes:
    \begin{equation}
        \begin{tikzcd}[column sep=3.5em]
\mbb{B}({B_i}) \arrow[r, "\mbb{B}(\partial^{B}_i)"]                 & \mbb{B}({B_{i-1}})                    \\
\mbb{B}({A_i}) \arrow[u, "\mbb{B}(\gamma_i)"] \arrow[r, "\mbb{B}(\partial^{A}_i)"] & {\mbb{B}(A_{i-1})} \arrow[u, "\mbb{B}(\gamma_{i-1})"]
\end{tikzcd}
    \end{equation}
    Since $\gamma$ is a chain map, $\partial_i^B\gamma_i = \gamma_{i-1}\partial_i^A$. Applying the binary lift operator $\mbb{B}(\cdot)$, which maps each  matrix entry in $g\in R$ to its faithful binary representation as a $l\times l$ matrix, we get $\mbb{B}(\partial_i^B\gamma_i) = \mbb{B}(\gamma_{i-1}\partial_i^A)$. Since $\mbb{B}(\cdot)$ is a linear map, this implies $\mbb{B}(\partial_i^B)\mbb{B}(\gamma_{i})=\mbb{B}(\gamma_{i - 1})\mbb{B}(\partial_i^A)$. Thus, the binary map $\mbb{B}(\gamma)$ is a valid chain map from $\mbb{B}(\mc{A})$ to $\mbb{B}(\mc{B})$. 
\end{proof}
\section{Proof of Lemma 1 - Homomorphic CNOT \label{sec:appx_proofhomomorphiccnot}}
\begin{lemma}[Homomorphic CNOT (Lemma 1 of the main text)]
    For any 3D LP code $\QABC=\mathrm{LP}(\mc{C}_A,\mc{C}_B,\mc{C}_C)$, there exists a logical CNOT circuit controlled by $\QABC$ and targeting any of its 2D component codes, implemented entirely via transversal physical CNOTs.
    \label{lemma:homo_CNOT_SI}
\end{lemma}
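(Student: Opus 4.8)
The plan is to realize the CNOT as the physical-qubit lift of an explicit chain map into $\QABC$ built from the tensor-product structure of the code, using the general principle that a degree-preserving chain map between CSS codes induces a logical CNOT implemented by a single round of transversal physical CNOTs.

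First I would use associativity and commutativity of $\otimes_R$ (Def.~\ref{def:product_complex} and the remark following the $3$D LP definition) to write $\mc{Q}_{\mr{ABC}}=\mc{Q}_{\mr{AB}}\otimes_R\mc{C}_C=\mc{Q}_{\mr{AC}}\otimes_R\mc{C}_B=\mc{Q}_{\mr{BC}}\otimes_R\mc{C}_A$, so that it suffices to handle one case, say the target $\QAB$; the other two 2D component codes follow by relabeling. Next I would construct the chain map. Writing $\underline{\F}$ for the trivial complex $\F$ in homological degree $0$, a chain map $\iota:\underline{\F}\to\mc{C}_C$ is just a choice of vector $v\in C_0^C$ (the boundary conditions are vacuous), and I would pick $v$ whose class is nonzero in $H_0(\mc{C}_C)=\mr{coker}(H_C)$, i.e.\ $v$ encoding a logical bit of the classical code $\mc{C}_C$. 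Then $\gamma:=\id_{\mc{Q}_{\mr{AB}}}\otimes_R\iota$ is a chain map $\mc{Q}_{\mr{AB}}\otimes_R\underline{\F}\to\mc{Q}_{\mr{AB}}\otimes_R\mc{C}_C$, i.e.\ a chain map $\gamma:\tilde{\mathcal{Q}}_{\mr{AB}}\to\QABC$ in the sense defined above, where $\tilde{\mathcal{Q}}_{\mr{AB}}$ is $\QAB$ regarded as a $3$-chain complex (padded by $0$ in the top degree). Explicitly $\gamma_i$ sends a basis element $x$ of degree $i$ to $x\otimes v$, landing in the summand $(\mc{Q}_{\mr{AB}})_i\otimes C_0^C$ of $Q_i^{\mr{ABC}}$; in particular $\gamma_1$ maps physical qubits to physical qubits. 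By Lemma~\ref{lemma:binary_chain_map}, $\gamma$ lifts to a chain map $\mbb{B}(\gamma)$ on the binary representations, which is the object that acts at the level of physical qubits.

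With the chain map in hand I would then argue it yields the transversal CNOT. Apply a physical CNOT from qubit $q'$ of $\QABC$ to qubit $q$ of $\tilde{\mathcal{Q}}_{\mr{AB}}$ whenever $\mbb{B}(\gamma_1)_{q,q'}=1$; this is a depth-one circuit because $\mbb{B}(\gamma_1)$ (being built from $v$ of fixed weight) has bounded row and column weight. Conjugating the joint stabilizer group, an $X$-stabilizer of $\QABC$ is sent to its product with its $\mbb{B}(\gamma_1)^{T}$-image on $\tilde{\mathcal{Q}}_{\mr{AB}}$, which is an $X$-stabilizer there exactly because of the commuting square $\partial_1^{\mr{ABC}}\gamma_1=\gamma_0\tilde{\partial}_1^{\mr{AB}}$; dually a $Z$-stabilizer of $\tilde{\mathcal{Q}}_{\mr{AB}}$ is sent to its product with its $\mbb{B}(\gamma_1)$-image on $\QABC$, a $Z$-stabilizer there by the square $\gamma_1\tilde{\partial}_2^{\mr{AB}}=\partial_2^{\mr{ABC}}\gamma_2$; all other generators are unchanged. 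Hence the circuit normalizes the stabilizer group, with $\QABC$ the control and $\QAB$ the target. To see that it implements a genuine (nontrivial) logical CNOT rather than the identity I would pass to (co)homology: by the Künneth theorem $H_\ast(\QABC)\cong H_\ast(\QAB)\otimes H_\ast(\mc{C}_C)$, and since $[v]\neq 0$ the induced maps $\gamma_\ast=\id\otimes\iota_\ast$ and $\gamma^\ast=\id\otimes\iota^\ast$ are nonzero on the blocks $H_\ast(\QAB)\otimes H_0(\mc{C}_C)$; tracing the CNOT conjugation rules through these maps shows that each logical $\overline{X}$ (resp.\ logical $\overline{Z}$) of $\QABC$ in the image block is copied onto the matched logical qubit of $\QAB$ (resp.\ back), which is precisely the logical CNOT controlled by $\QABC$ targeting $\QAB$.

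The main obstacle I anticipate is not the existence of $\gamma$---which is immediate from the tensor-product structure---but the bookkeeping in the last step: one must carefully align homological degrees so that physical qubits map to physical qubits and so that the \emph{direction} of the induced CNOT comes out as ``$\QABC$ controls the 2D code'' rather than the reverse, and one must track the Künneth decomposition to pin down which logical qubits of $\QABC$ are paired with which logical qubits of the 2D component code and confirm the induced map is the intended copy. Once these are settled, transversality (depth one, bounded fan-out) follows automatically from the bounded weight of $\mbb{B}(\gamma_1)$.
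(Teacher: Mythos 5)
Your approach coincides with the paper's: both construct the chain map as $\id_{\QAB}\otimes\iota$, where $\iota$ embeds a one-dimensional trivial complex in degree $0$ into $\mc{C}_C$, appeal to the tensor-product structure to get a chain map $\tQAB\to\QABC$ (the paper verifies this explicitly in its Claim~1.1, which you take as a standard fact), and then lift to $\mbb{F}_2$ via Lemma~\ref{lemma:binary_chain_map} to read off a physical CNOT circuit from the support of $\mbb{B}(\gamma_1)$.

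Two points need tightening. First, transversality (a depth-one circuit, each physical qubit touched at most once) requires the rows and columns of $\mbb{B}(\gamma_1)$ to have weight \emph{at most one}, not merely bounded weight as you state; a $v$ of weight $w>1$ would give a depth-$w$ circuit. The paper secures weight $\leq 1$ by choosing $\gamma_0^Y=\mb{e}^{r_C}_q$, a standard basis vector over $R$ carrying the group identity $e\in R$, whose binary lift is a single weight-one column, and you should commit to such a weight-one choice. Second, your insistence that $[v]\neq 0$ in $H_0(\mc{C}_C)$, together with the K\"unneth bookkeeping to show the induced logical map is a genuine copy, is stronger than Lemma~\ref{lemma:homo_CNOT_SI} asserts and is in fact the content of the paper's Theorem 2 and Proposition~\ref{prop:logical_injective} (``logical transversality''/injectivity of $\bar\gamma_1$). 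Lemma 1 itself only claims existence of a transversal physical circuit realizing a logical CNOT, possibly with a degenerate logical action; the paper therefore leaves $q$ arbitrary. If you do want nontriviality and transversality simultaneously at this stage, you would additionally need to argue that a \emph{weight-one} $v$ with nontrivial class in $\mr{coker}(H_C)$ exists, which is not automatic and is not addressed in your sketch.
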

We will show: (1) there exists such a logical CNOT circuit, and (2) it is implemented entirely via transversal physical CNOTs. \\\\
\textbf{Note:}\begin{itemize}
    \item The argument presented below applies to homomorphic CNOTs between $\QABC$ and any of its component 2D codes, $\{\QAB, \QAC, \QBC\}$, by symmetry. We chose to work with $\QAB$ without loss of generality. 
    \item This homomorphic CNOT construction can be generalized to a ``parent" LP code at an arbitrary dimension $D$ and one of its $D-1$-dimensional component codes. 
    In particular, it is possible to construct a chain map and thus logical CNOT between any $D$-dimensional LP code and any of its $D-1$ component code. 
    Since the 2D-3D code pair are the lowest-dimension pair that can achieve universal computation, we focus on this case to make the construction fully concrete. 
\end{itemize}
\textbf{(1) Construction of a logical CNOT:} Recall that a logical CNOT controlled by $\QABC$ targeting a 2D component code $\QAB$ is defined by a chain map between the binary representations of the codes, $\mbb{B}(\tQAB)\rightarrow \mbb{B}(\QABC)$. Specifically, it is implemented at the physical level by $\gamma_1$, which maps between the qubits of the codes. Since there is a faithful mapping from elements in $R$ to its binary representation, and a chain map over $R$ induces a chain maps over $\mbb{F}_2$, we can equivalently find a chain map over $R$, $\gamma:\QAB\rightarrow\QABC$. Thus, to show that a logical CNOT circuit exists, we show that it is always possible to construct a chain map $\gamma:\QAB\rightarrow \QABC$.\\\\
\noindent
\textbf{Claim 1.1:}
    A chain map between two LP codes can be constructed as the tensor product of chain maps between their base codes~\cite{xu2025fast}. Specifically, let $\mc{Q}_{\mr{XY}}=\mr{LP}(X,Y)$ and $\mc{Q}_{\mr{XY}}'=\mr{LP}(X',Y')$ be 3-chain complexes over $R$ with boundary maps $\{\partial^{XY}_i\}_{i \in [3]}$ and $\{\partial_i^{{XY}^{\prime}}\}_{i \in [3]}$, respectively, where $X,X'$ are 2-chain complexes of 2D LP codes and $ Y,Y'$ are 1-chain complexes of classical codes. Let $\gamma_X =\{\gamma^X_i\}_{i = 0}^2:X'\rightarrow X$ and $\gamma_Y = \{\gamma^Y_i\}_{i = 0}^1:Y'\rightarrow Y$ be chain maps on the 2D quantum codes and the classical codes, respectively. Then, $\gamma_{XY} = \{\gamma^{XY}_i\} _{i = 0}^3 =  \gamma_X\otimes_R \gamma_Y$, defined as
    \begin{equation}
    \begin{aligned}
         \gamma_0 & = \gamma_0^X\otimes\gamma_0^Y,\\
    \gamma_1 & = (\gamma_1^X \otimes \gamma_0^Y)\oplus (\gamma_0^X\otimes\gamma_1^Y), \\
    \gamma_2 & = (\gamma_2^X\otimes \gamma_0^Y)\oplus (\gamma_1^X\otimes \gamma_1^Y),\\
        \gamma_3 & = \gamma_2^X\otimes\gamma_1^Y,
    \end{aligned}
    \end{equation}
    induces a chain map $\gamma_{XY}:\mc{Q}_{XY}'\rightarrow \mc{Q}_{XY}$ on the 3D LP codes.
\begin{proof}
To show that $\gamma_X\otimes_R \gamma_Y$ induces a chain map from $\mc{Q}_{XY}'$ to $\mc{Q}_{XY}$, we need to show that, for each degree $k\in\{1,2,3\}$, $\partial_k^{XY}\gamma^{XY}_k =\gamma^{XY}_{k-1}{\partial_k^{XY}}'$. 
Since $X$ is a 2-chain complex and $Y$ is a 1-chain complex, the chain maps $\gamma_X$ and $\gamma_Y$ satisfy $\partial_i^X\gamma_i^X = \gamma_{i-1}^X {\partial_i^X}'$ for $i = 1, 2$ and $\partial_1^Y\gamma_1^Y = \gamma_{0}^Y {\partial_1^Y}'$. Therefore, for each degree $k$, we check:
\begin{equation}
    \begin{aligned}
    k=1:\;  \partial_1^{XY}\gamma_1 & = \begin{bmatrix}
        \partial^{X}_1 \otimes I & I \otimes \partial_1^{Y}
    \end{bmatrix}\begin{bmatrix}
        \gamma_1^X\otimes\gamma_0^Y & 0 \\
       0 &  \gamma_0^X\otimes \gamma_1^Y
    \end{bmatrix} \\
    & = \begin{bmatrix}
        \partial_1^X\gamma_1^X\otimes \gamma_0^Y & \gamma_0^X\otimes \partial_1^Y\gamma_1^Y
    \end{bmatrix}\\
    & = \begin{bmatrix}
        \gamma_0^X{\partial_1^X}'\otimes \gamma_0^Y & \gamma_0^X\otimes \gamma_0^Y{\partial_1^Y}'
    \end{bmatrix} \\
    & = (\gamma_0^X\otimes\gamma_0^Y)\begin{bmatrix}
        {\partial_1^X}'\otimes I & I \otimes {\partial_1^Y}'
    \end{bmatrix} \\
    & = \gamma_0{\partial_1^{XY}}'
    \end{aligned}
\end{equation}
\begin{equation}
    \begin{aligned}
        k=2:\;\partial_2^{XY} \gamma_2 & = \begin{bmatrix}
        \partial_2^{X}\otimes I & I \otimes \partial_1^Y \\
        0 & \partial_1^{X}\otimes I
    \end{bmatrix}\begin{bmatrix}
        \gamma_2^X\otimes\gamma_0^Y &0\\
        0 & \gamma_1^X\otimes \gamma_1^Y
    \end{bmatrix}\\
    & = \begin{bmatrix}
        \partial_2^X\gamma_2^X\otimes\gamma_0^Y & \gamma_1^X\otimes \partial_1^Y\gamma_1^Y\\
        0 & \partial_1^X\gamma_1^X\otimes\gamma_1^Y
    \end{bmatrix}\\
    & = \begin{bmatrix}
        \gamma_1^X{\partial_2^X}'\otimes\gamma_0^Y & \gamma_1^X\otimes\gamma_0^Y{\partial_1^Y}'\\
        0 & \gamma_0^X{\partial_1^X}'\otimes\gamma_1^Y
    \end{bmatrix}\\
    & = \begin{bmatrix}
        \gamma_1^X\otimes\gamma_0^Y & 0 \\
        0 & \gamma_0^X\otimes\gamma_1^Y
    \end{bmatrix}\begin{bmatrix}
        {\partial_2^{X}}'\otimes I & I \otimes {\partial_1^Y}' \\
        0 & {\partial_1^{X}}'\otimes I
    \end{bmatrix} \\
    & = \gamma_1{\partial_2^{XY}}'\\
    \end{aligned}
\end{equation}
\begin{equation}
    \begin{aligned}
        k=3:\;\partial_3^{XY} \gamma_3 & = \begin{bmatrix}
        I \otimes \partial_1^Y\\
        \partial_2^{X}\otimes I
    \end{bmatrix} (\gamma_2^X\otimes \gamma_1^Y) \\
    & = \begin{bmatrix}
        \gamma_2^X\otimes \partial_1^Y\gamma_1^Y\\
        \partial_2^X\gamma_2^X\otimes\gamma_1^Y
    \end{bmatrix}\\
    & = \begin{bmatrix}
        \gamma_2^X\otimes\gamma_0^Y{\partial_1^Y}'\\
        \gamma_1^X{\partial_2^X}'\otimes\gamma_1^Y
    \end{bmatrix}\\
    & = \begin{bmatrix}
        \gamma_2^X\otimes\gamma_0^Y &0\\
        0 & \gamma_1^X\otimes \gamma_1^Y
    \end{bmatrix}\begin{bmatrix}
        I \otimes {\partial_1^Y}'\\
        {\partial_2^{X}}'\otimes I
    \end{bmatrix}\\
    & = \gamma_2{\partial_3^{XY}}'.
    \end{aligned}
\end{equation}
    Therefore, $\gamma_{XY} = \gamma_X\otimes \gamma_Y$ is a valid chain map for ${\mc{Q}_{XY}}'\rightarrow\mc{Q}_{XY}$.
\end{proof}
\noindent \textbf{Claim 1.2:} Given any 3D LP code $\QABC$ and a 2D component codes $\QAB$, we can always construct a chain map $\gamma:\QAB\rightarrow\QABC$.
\begin{proof}
    The 2D code $\QAB$ is isomorphic to $\tQAB = \mr{LP}(\QAB,\mc{C}_0)$, where $\mc{C}_0$ is the trivial subcomplex of $\mc{C}_C$, with chain complex $0
    \rightarrow R^1$. Let $X'=X=\QAB$, $Y = \mc{C}_{\mr{C}}$, and $Y'=\mc{C}_0$. Then, $\QAB\simeq\tQAB=\mr{LP}(X',Y')$ and $\QABC = \mr{LP}(X,Y)$, so $\gamma:\tQAB\rightarrow\QABC = \mr{LP}(X',Y')\rightarrow \mr{LP}(X,Y)$. Since $X'=X$, we can simply set $\gamma_X:X'\rightarrow X$ to be the identity map --- i.e., identity matrices as each component. The boundary operator of $Y$, $\partial_1^Y$ is a $r_Y\times n_Y$ matrix over $R$. The boundary operator of $Y'$ is the empty $1\times 0$ matrix over $R$, i.e. $\partial_1^{Y \prime} = []_{1\times 0}$.
    For $\gamma_Y:Y' = \mc{C}_0 \rightarrow Y=\mc{C}_C$, set $\gamma_1^Y \coloneq [\;]_{n_Y\times 0}$ and $\gamma_0^Y \coloneq \mb{e}^{r_Y}_q$, the unit vector in $R^{r_Y}$ with the identity element $e\in R$ at row $q \in [r_Y]$. This forms a valid chain map $\gamma_Y:Y'\rightarrow Y$, since $\partial_1^Y\gamma_1^Y = \gamma_0^Y{\partial_1^Y}'=[\;]_{r_Y\times 0}$.
    Note that this classical code chain map $\gamma_Y$ can be understood using the general framework in Ref.~\cite{xu2025fast}: $\mc{C}_0$ is obtained by shortening on the transpose of the check matrix $\partial^Y_1$ on all but one bit (in $R$) of $Y$ ($\mc{C}_C$), which is a structure preserving operation, inducing the homomorphism $\gamma_Y$.

Thus, by taking the product of chain maps for $X$ and $Y$, we obtain a chain map $\gamma = \gamma_{XY}:\mr{LP}(X',Y')\rightarrow\mr{LP}(X,Y)=\tilde{\mc{Q}}_{AB}\rightarrow\QABC$. In particular, $\gamma_1$, which maps between the qubits of the codes, takes the form:
\begin{align}
    \gamma_1 & = (\gamma_1^{X} \otimes \gamma_0^{Y})\oplus (\gamma_0^{X}\otimes\gamma_1^{Y})\\
    & = (I_{n_{\mr{AB}}}\otimes \mb{e}^{r_Y}_{q})\oplus(I_{r_{\mr{AB}}^X}\otimes [\;]_{n_C\times 0})\\
    & = (I_{n_{\mr{AB}}} \otimes \mb{e}^{r_C}_{q})\oplus([\;]_{r_{\mr{AB}}^Xn_C\times 0}) \\
    & = \begin{bmatrix}
        I_{n_{\mr{AB}}}\otimes \mb{e}^{r_C}_q\\
        0_{r_{\mr{AB}}^Xn_C\times n_{AB}}
    \end{bmatrix},
    \label{eq:gamma_1}
\end{align}
where $n_{\mr{AB}}$, $n_C$, $r_Y = r_C$ and $r^X_{\mr{AB}}$ denote the dimension of the qubits of $\mc{Q}_{\mr{AB}}$ ($Q^{\mr{ABC}}_1$), the classical bits of $\mc{C}_C$ $(C_1)$, the classical checks of $\mc{C}_C$ ($C_0$), and the $X$ checks of $\mc{Q}_{\mr{AB}}$ ($Q^{\mr{AB}}_0$), respectively, over $R$. 
The choice of the row $q \in [r_C]$ can be arbitrary. 
\end{proof}

\noindent \textbf{Claim 1.3:} The above construction for $\gamma_1$ defines a logical CNOT between $\tQAB$ and $\QABC$, and it is an inclusion map on the physical qubits.
\begin{proof}
Since $\gamma$ is chain map from $\tQAB$ to $\QABC$, it maps boundaries to boundaries, and cycles to cycles~\cite{huang2022homomorphic, xu2025fast}. 
Therefore, $\gamma_1$ induces a chain map between the first homologies $H_1$ of the two codes \cite{hatcher2005algebraic}, $\bar{\gamma}_1:H_1(\tQAB)\rightarrow H_1(\QABC)$. The induced map is defined as $\bar{\gamma}_1([v]) = [\gamma_1 (v)]$ for any $v\in Z_1(\tQAB)$ --- i.e., logical Z cosets of the 2D code $\QAB$ (or more specifically, of $\tQAB=\QAB\otimes\mc{C}_0$ where $\mc{C}_0$ is the trivial classical code with zero bits and one check) are mapped to logical Z cosets of the 3D code $\QABC$. 
Clearly, according to Eq.~\eqref{eq:gamma_1}, $\gamma_1$ is an inclusion map when viewing the qubits of $\tilde{Q}_{\mr{AB}}$, $\tilde{Q}^{\mr{AB}}_1$, as a subset of those of $\QABC$, $Q^{\mr{ABC}}_1$, and different choice of $q$ in Eq.~\eqref{eq:gamma_1} corresponds to different ways of defining the inclusion map.
\end{proof}
Combining the above, this shows there is always exists a logical CNOT circuit from the 3D LP code $\QABC$ to a component 2D code $\QAB$, implemented using transversal physical CNOTs. \\\\
\noindent \textbf{(2) Transversal physical implementation of the logical CNOT:} Here, we show that the logical CNOT circuit associated with the chain map $\gamma$ is implemented entirely via transversal physical CNOTs.
\begin{proof}
    Since a homomorphic CNOT between $\QABC$ and $\QAB$ is defined by the nonzero entry indices of $\mathbb{B}(\gamma_1)$, where $\gamma_1$ is the degree-1 component of the chain map $\gamma:{\QAB}\rightarrow {\QABC}$, the physical CNOT circuit being transversal means that the weight of the rows and columns of $\mathbb{B}[\gamma_1]$ is at most 1, which clearly holds for Eq.~\eqref{eq:gamma_1}. 
\end{proof}


\section{Proof of Theorem 2 - Transversal dimension jump\label{sec:appx_prooftransversaldimjump}}
\begin{theorem}[Transversal dimension jump (Theorem 2 of the main text)]
    Given any 3D LP code $\QABC = \mr{LP}(\mc{C}_A, \mc{C}_B, \mc{C}_C)$, one can teleport a subset of its logical qubits to or from a 2D component code in parallel using the homomorphic CNOT of Lemma~\ref{lemma:homo_CNOT_SI}, provided they are logically transversal.
    \label{th:transversal_dim_jump}
\end{theorem}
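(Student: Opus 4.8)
The plan is to use the homomorphic CNOT of Lemma~\ref{lemma:homo_CNOT_SI} as the coupling gate in a Steane-style teleportation (state injection/extraction) circuit between the 3D code $\QABC$ and a 2D component code, say $\QAB$, and to argue that the logical action of the combined circuit is a product of single-logical-qubit teleportations — hence ``parallel'' — provided the relevant logical operators are transversal in the sense described. First I would set up the circuit primitive: prepare the 2D code $\QAB$ in a fixed logical basis state (e.g.\ $\overline{|0\rangle}^{\otimes k'}$ or $\overline{|+\rangle}^{\otimes k'}$ on its $k'$ logical qubits), apply the transversal physical CNOT circuit realizing $\gamma_1$ (Claim~1.3) between $\QABC$ (control) and $\QAB$ (target), and then measure one of the two codes destructively in the appropriate transversal single-qubit basis. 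By the standard teleportation identity this moves logical information between the codes up to Pauli corrections; the content to be verified is \emph{which} logical qubits move and that they move independently.

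The key steps, in order: (i) Recall from Claim~1.3 that $\gamma_1$ induces $\bar\gamma_1: H_1(\tQAB)\to H_1(\QABC)$ on logical ($Z$-type) classes, and dually a map on $H^1$ / logical $X$-type classes; identify the subset $S$ of logical qubits of $\QABC$ that lie in the image of $\bar\gamma_1$ (equivalently, whose logical representatives are supported on the sub-block $I_{n_{\mathrm{AB}}}\otimes \mathbf{e}^{r_C}_q$ of qubits). These are exactly the logical qubits that can be teleported. (ii) Write the logical action of the transversal CNOT: because $\mathbb{B}(\gamma_1)$ has row and column weight $\le 1$ (the transversality established in part (2) of Lemma~\ref{lemma:homo_CNOT_SI}), the physical CNOTs commute past all stabilizer measurements cleanly, and the induced logical operator is $\prod_{i} \overline{\mathrm{CNOT}}_{i \to f(i)}$ for the partial matching $f$ determined by $\bar\gamma_1$ on a chosen symplectic logical basis. (iii) Feed this into the teleportation identity: measuring the control block of $\QABC$ in the $\bar X$ basis (or the target in $\bar Z$) collapses each matched pair to a logical Bell-type relation, transferring the logical state of each $i\in S$ onto its partner $f(i)$ in $\QAB$, with a Pauli byproduct that is itself a valid (transversal) logical Pauli, hence correctable by updating the Pauli frame. (iv) Conclude the map is ``in parallel'': since $f$ is a partial matching (injective on $S$), distinct source logical qubits go to distinct target logical qubits with no cross terms, so the whole operation factorizes as a tensor product of single-qubit teleportations on $S$, acting as identity on the complementary logical qubits of $\QABC$ (which remain protected, since the unmatched qubits see only weight-$\le 1$ CNOTs that act as logical identity on them). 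The ``or from'' direction is the time-reverse: inject a prepared 2D state into the matched logical slots of a freshly prepared 3D code.

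The main obstacle I expect is step (iv): carefully showing that the teleportation acts as the \emph{identity} on the unmatched logical qubits of $\QABC$ and does not entangle them with the measured/ancillary code, and correspondingly that the ``logically transversal'' hypothesis is exactly the condition needed for this to hold. Concretely, one must check that a logical representative of an unmatched qubit can be chosen disjoint from the support of $\gamma_1$ (so the transversal CNOT commutes with it), and that the measured code's stabilizer group, pushed through $\gamma_1$, does not create logical back-action on $\QABC$ — this is where the chain-map property (boundaries to boundaries, cycles to cycles) and the specific block form of $\gamma_1$ in Eq.~\eqref{eq:gamma_1} do the work. A secondary technical point is bookkeeping the Pauli corrections: I would argue that the syndrome of the measured code determines a correction that is a transversal logical Pauli on the surviving code, so it can be tracked in software rather than applied, keeping the whole protocol fault-tolerant and depth-one in physical two-qubit gates.
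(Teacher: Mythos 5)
Your proposal follows essentially the same route as the paper: pass to the induced map $\bar\gamma_1$ on first homology, use the ``logically transversal'' hypothesis (injectivity of $\bar\gamma_1$) to realize that the logical action of the homomorphic CNOT is a disjoint pairwise coupling of $k_{\mr{AB}}$ logical qubits of $\tQAB$ to a subset of logical qubits of $\QABC$, and plug this into a one-bit teleportation circuit. The paper's proof is exactly this, recorded by putting $\bar\gamma_1$ in the block form $\begin{pmatrix} I_{k_{\mr{AB}}} \\ 0_{(k_{\mr{ABC}}-k_{\mr{AB}})\times k_{\mr{AB}}}\end{pmatrix}$ in a suitable logical basis and then invoking the teleportation circuit of Fig.~1(b--c).

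One logical slip worth flagging: in your step (ii) you attribute the factorization of the logical action into $\prod_i \overline{\mathrm{CNOT}}_{i\to f(i)}$ to the fact that $\mathbb{B}(\gamma_1)$ has row and column weight at most one. That is the \emph{physical} transversality from Lemma~\ref{lemma:homo_CNOT_SI}, and it does not by itself force the logical action to be a partial matching --- a physically transversal CNOT layer can in general realize an arbitrary $\mathbb{F}_2$ adjacency matrix on the logical qubits. The factorization is instead a consequence of the injectivity of $\bar\gamma_1$ (the ``logically transversal'' hypothesis of the theorem) together with a choice of logical basis putting $\bar\gamma_1$ in the block form above. You do reference $\bar\gamma_1$ and the hypothesis later, so the right ingredient is present, but the ``because'' is pointing at the wrong premise. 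Your worry in step (iv) --- that unmatched logical qubits of $\QABC$ see only identity --- is then resolved for free by that same block form: the zero block means the logical CNOT circuit literally does not touch the remaining $k_{\mr{ABC}}-k_{\mr{AB}}$ logical qubits, so no separate disjoint-support argument is needed.
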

\begin{proof}
Let $\gamma = \{\gamma_0,\gamma_1,\gamma_2,\gamma_3\}$ be a chain map between a 3D LP code $\QABC$ and $\tQAB\simeq\QAB$, one of its component codes, where $\gamma_1$ is given by Eq.~\eqref{eq:gamma_1}. Let $\bar{\gamma}_1:H_1(\tQAB)\rightarrow H_1(\QABC)$ be the induced map between the first homologies, which maps between cosets of logical Z operators of $\tQAB$ and $\QABC$. As described in the main text, if $\bar{\gamma}_1$ is injective, then, setting it in a certain logical basis, it can be written in the following transversal form:
\begin{equation}
    \bar{\gamma}_1=\begin{pmatrix} I_{k_{\mr{AB}}} \\ 0_{(k_{\mr{ABC}} - k_{\mr{AB}})\times k_{\mr{AB}}}\end{pmatrix},\qquad (k_{\mr{AB}}\leq k_{\mr{ABC}}),
\end{equation}
where $k_{\mr{AB}}$ and $k_{\mr{ABC}}$ denotes the dimension of $H_1(\tQAB)$ and $H_1(\mc{Q}_{\mr{ABC}})$, respectively.
In particular, injectivity of $\bar{\gamma}_1$ --- $\ker{\bar{\gamma}_1}=0$ --- means that no nontrivial logical Z operator of the 2D code is mapped to a stabilizer in the 3D code, and every logical qubit of the 2D code couples distinctly to one of the 3D code. Therefore, the homomorphic CNOT circuit is \emph{logically transversal} (not to be confused with transversal physical implementation).
By combining such a transversal logical CNOT circuit with the logical teleportation circuit of Fig.1(b–c) of the main text, we can teleport between all logical qubits are $\tQAB$ and the corresponding subset of $k_\mr{AB}$ logical qubits $\QABC$ in parallel. 
\end{proof}

\begin{corollary}
    Given $N$ copies of component 2D codes $\mc{Q}_{2D}$, $\{\mc{Q}_{2D}^{(i)}:\mc{Q}_{2D}^{(i)}\in\{\QAB,\QBC, \QAC\}\}_{i \in [N]}$, each coupled to $\mc{Q}_{3D}$ with a homomorphism $\gamma_1^{(i)}$, we can implement the homomorphic CNOTs associated with $\{\gamma_1^{(i)}\}_{i \in [N]}$ in parallel if the images of $\{\gamma_1^{(i)}\}_{i \in [N]}$ are disjoint. 
\end{corollary}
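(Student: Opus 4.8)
The plan is to show that the overlay of the $N$ individual homomorphic-CNOT circuits is itself a single depth-one circuit of transversal physical CNOTs whose logical action is the product of the $N$ logical CNOTs. First I would recall from the proof of Lemma~\ref{lemma:homo_CNOT_SI} that the homomorphic CNOT associated with $\gamma_1^{(i)}$ consists of the physical CNOTs whose control--target pairs are indexed by the nonzero entries of $\mbb{B}(\gamma_1^{(i)})$, with controls on the physical qubits of $\mc{Q}_{3D}$ and targets on those of $\mc{Q}_{2D}^{(i)}$ (the construction applies verbatim to each of $\QAB,\QBC,\QAC$ by the symmetry noted after Lemma~\ref{lemma:homo_CNOT_SI}). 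By the inclusion-map form of $\gamma_1$ in Eq.~\eqref{eq:gamma_1}, $\mbb{B}(\gamma_1^{(i)})$ has row and column weight at most one; hence, within the $i$-th circuit alone, each qubit of $\mc{Q}_{3D}$ serves as a control at most once and each qubit of $\mc{Q}_{2D}^{(i)}$ serves as a target at most once, and the set of $\mc{Q}_{3D}$ qubits it touches is exactly $\im{\gamma_1^{(i)}}$.

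Next I would verify that overlaying all $N$ circuits preserves depth one. On the target side there is no conflict: the $N$ blocks $\{\mc{Q}_{2D}^{(i)}\}_{i\in[N]}$ are distinct code copies occupying pairwise-disjoint sets of physical qubits, each touched only by its own circuit and only once. On the control side, a qubit $p$ of $\mc{Q}_{3D}$ is acted on by the $i$-th circuit if and only if $p\in\im{\gamma_1^{(i)}}$; since the images $\{\im{\gamma_1^{(i)}}\}_{i\in[N]}$ are pairwise disjoint by hypothesis, $p$ lies in at most one of them, so $p$ is touched by at most one circuit and, within that circuit, at most once. Therefore every physical qubit participates in at most one physical CNOT, and the combined circuit is transversal.

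Finally I would conclude that this transversal circuit implements the $N$ homomorphic CNOTs in parallel: because the $N$ sub-circuits act on mutually disjoint physical supports (disjoint images on $\mc{Q}_{3D}$, disjoint blocks on the 2D side) they pairwise commute, so the combined circuit is their product, and by Lemma~\ref{lemma:homo_CNOT_SI} each factor realizes the logical CNOT induced by the corresponding $\bar{\gamma}_1^{(i)}$. The only nontrivial point is to see that disjointness of the \emph{images}, rather than merely within-circuit transversality, is the correct hypothesis: a qubit of $\mc{Q}_{3D}$ shared by two inclusions would still give commuting CNOTs, but it would couple the two 2D blocks through a common physical qubit (and a single fault on it would spread to both), violating strict transversality and the fault-tolerance of the parallel gate. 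Thus the disjoint-image condition is exactly what is needed, and once it is imposed the remainder is bookkeeping.
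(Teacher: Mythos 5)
The paper states this corollary without a proof, treating it as an immediate consequence of Lemma~\ref{lemma:homo_CNOT_SI} and the preceding discussion; there is therefore no proof of record to compare against. Your argument is correct and supplies the intended justification. You correctly identify that each homomorphic CNOT is the set of physical CNOTs indexed by the nonzero entries of $\mbb{B}(\gamma_1^{(i)})$, that the $2$D blocks occupy pairwise-disjoint physical qubits so there is never a target-side conflict, and that disjointness of the images $\im{\gamma_1^{(i)}}$ is exactly what guarantees no control-side conflict on the shared $3$D block, so the union is a single depth-one, transversal layer of CNOTs whose logical action factors into the $N$ individual logical CNOTs. Your closing observation --- that without disjoint images the individual CNOTs would still commute and give the right logical action, but the circuit would no longer be depth-one nor strictly transversal (and a single fault could propagate to more than one $2$D block) --- correctly pins down why the hypothesis is disjointness of images rather than something weaker, which is precisely the content the corollary is meant to convey.
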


\subsection{Inclusion map and logical transversality\label{sec:appx_inclusionmaplogicaltransversality}}
In this section, we describe in details the conditions under which the induced homologies map $\bar{\gamma}_1: H_1(\tilde{Q}_{\mr{AB}}) \rightarrow H_1(\QABC)$ is injective, which indicates that the logical CNOTs are also transversal.


Recall that $H_1=Z_1/B_1=\ker{H_X}/\im{H_Z^T}$ of a quantum code corresponds to its equivalence classes of logical Z operators.
\begin{proposition}
     $\bar{\gamma}_1$ is injective if and only if $\gamma_1$ satisfies that: for all $[l]_{\mr{AB}} \in H_1(\QAB)$, $\gamma_1(l) \in B_1(\QABC)$ if and only if $[l]_{\mr{AB}}$ is trivial, i.e. $l \in B_1(\QAB)$.
    \label{prop:logical_injective}
\end{proposition}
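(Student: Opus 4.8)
The plan is to unpack the definition of injectivity for the $\F$-linear map $\bar{\gamma}_1$ on homology, which are quotient spaces, so that injectivity becomes a statement purely about which cycles land in a boundary. First I would recall from Claim 1.3 that, because $\gamma$ is a chain map, $\gamma_1$ carries $Z_1(\tQAB)$ into $Z_1(\QABC)$ and $B_1(\tQAB)$ into $B_1(\QABC)$; hence the induced map $\bar{\gamma}_1([v]) := [\gamma_1(v)]$ is well defined on $H_1$, and using $\tQAB \simeq \QAB$ we may regard it as a map $H_1(\QAB)\to H_1(\QABC)$. Since $\bar{\gamma}_1$ is $\F$-linear, it is injective if and only if $\ker{\bar{\gamma}_1}=0$.

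Next I would compute the kernel explicitly: $[l]_{\mr{AB}}\in\ker{\bar{\gamma}_1}$ means $[\gamma_1(l)]=0$ in $H_1(\QABC)$, i.e. $\gamma_1(l)\in B_1(\QABC)$. Therefore $\ker{\bar{\gamma}_1}=0$ is exactly the assertion that the only classes $[l]_{\mr{AB}}\in H_1(\QAB)$ with $\gamma_1(l)\in B_1(\QABC)$ are the trivial ones, $l\in B_1(\QAB)$. For the reverse implication appearing in the ``if and only if'' of the proposition, I would note that $l\in B_1(\QAB)\Rightarrow\gamma_1(l)\in B_1(\QABC)$ holds automatically, since chain maps send boundaries to boundaries (again Claim 1.3). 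Consequently the biconditional in the statement is equivalent to its single nontrivial implication $\gamma_1(l)\in B_1(\QABC)\Rightarrow l\in B_1(\QAB)$, which is precisely $\ker{\bar{\gamma}_1}=0$, i.e. injectivity of $\bar{\gamma}_1$. Chaining these equivalences in both directions gives the claim.

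I do not expect a genuine obstacle here; the proof is a definitional restatement. The only points needing care are (i) citing Claim 1.3 to justify that $\bar{\gamma}_1$ is well defined and that boundaries map to boundaries, and (ii) tracking the identification $H_1(\tQAB)\cong H_1(\QAB)$ so the final statement is phrased in terms of $\QAB$. The conceptual payoff to flag is that $\ker{\bar{\gamma}_1}=0$ says no nontrivial logical $Z$ of the 2D code is sent into the stabilizer group (image of $H_Z^T$) of the 3D code, and the proposition merely records this as the transversality condition.
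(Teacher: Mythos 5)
Your proposal is correct and takes essentially the same route as the paper: both simply unpack the definitions of the induced homology map and of injectivity, noting that $\ker\bar{\gamma}_1=0$ is the assertion that $\gamma_1(l)\in B_1(\QABC)$ forces $l\in B_1(\QAB)$. Your version is just more explicit about why the map is well defined and why one direction of the biconditional is automatic.
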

\begin{proof}
    This follows from the definition of the induced chain map and injectivity. A logical Z operator $l$ (possibly a trivial operator, i.e., a Z stabilizer) of the 2D code is mapped a Z stabilizer of the 3D code if and only if the $l$ itself is a Z stabilizer of the 2D code. This ensures that $\ker{\bar{\gamma}_1}=\{[0]\}=0$.
\end{proof}

Note that Proposition~\ref{prop:logical_injective} is a condition that one would easily check numerically. We have numerically verified that it holds for all the univariate LP codes (the lifted Toric codes) as well as the trivariate tricycle code in Table I of the main text.

As for more concrete algebraic conditions, we have shown in the main text that $\bar{\gamma}_1$ is injective for any 3D hypergraph product (HGP) code and any of its 2D component codes, i.e. Proposition~\ref{prop:logical_injective} is trivially satisfied for all HGP code pairs.
Here, we derive a simple algebraic condition for the multivariate tricycle-bicycle code pairs.

For an odd-order group $G$, its group algebra ring $R = \mbb{F}_2[G]$ is semi-simple and we have the Künneth formula for $H_1(\QABC)$:
\begin{equation}
    H_1(\QABC) = H_1(\QAB)\otimes H_0(\mc{C}_C) \oplus H_0(\QAB)\otimes H_1(\mc{C}_C),
    \label{eq:HGP_kunneth}
\end{equation}
where $H_1(\mc{C}_C)$ (resp. $H_0(\mc{C}_C)$) denotes the kernel (resp. cokernel) of $H_C \in R^{r_C\times n_C}$ --- the check matrix of $\mc{C}_C$. 
In this case, $\bar{\gamma}_1$ takes the form:
\begin{equation}
    \bar{\gamma}_1: R^1 \otimes_R H_1(\tQAB) \rightarrow H_0(\mc{C}^A)\otimes_R H_1(\mc{Q}^{BC}).
\end{equation}
For tricycle/bicycle codes, we have $H_A = a, H_B = b$ and $H_C = c$ with $a, b, c \in R$, i.e. the classical codes have 1 by 1 (scalar) matrices over $R$, and thus the first sector of Eq.~\eqref{eq:HGP_kunneth} becomes
\begin{equation}
    H_1(\QAB)\otimes H_0(\mc{C}_C) = H_1(\QAB)\otimes R/(c),
\end{equation}
where $(c)$ denotes an ideal of $R$ generated by $c \in R$, i.e. $(c)=\{rc:r\in R\}$. 
According to Ref.~\cite{eberhardt2024logical}, we further have $H_1(\QAB) \simeq R/(a, b) \oplus R/(a, b)$, where $(a, b)$ denotes an ideal generated by $a, b \in R$. This leads to 
\begin{equation}
    H_1(\QAB)\otimes H_0(\mc{C}_C) \simeq R/(a, b, c) \oplus R/(a, b, c).
\end{equation}
Therefore, $\bar{\gamma}_1$ is injective if and only if $c \in (a, b)$ (over $R$).
To this end, we have derived a sufficient algebraic condition for $\bar{\gamma}_1$ to be injective for the (multivariate) tricycle/bicycle codes: 
\begin{enumerate}
    \item The defining group order $|G|$ is odd.
    \item The defining polynomials need to satisfy $c \in (a, b)$.
\end{enumerate}

\subsection{Logical transversality of case studies\label{sec:appx_transversalitycasestudies}}
We now show that, for all three of the examples of 3D tricycle codes in Table I of the main text, we have $c \in (a, b)$, and thereby achieve logical transversality:
\begin{enumerate}
    \item For the $\llbracket 27, 3, 3\rrbracket$ code over $R=\mbb{F}_2[x,y]/(x^3+1,y^3+1)$ with $a = x^2y + x^2y^2$, $b = 1+xy^2$, $c = x+x^2y$. Set $u=x^2,v=x^2\in R$. 
    \begin{align*}
        ua + vb & = 1(x^2y + x^2y^2)+x(1+xy^2)\\
        & = x^2y+x^2y^2 + x+x^2y^2\\
        & = x+x^2y\\
        & = c
    \end{align*}
    \item For the $\llbracket 45, 3, 4\rrbracket$ code over $R=\mbb{F}_2[x,y]/(x^3+1,y^5+1)$ with $a = x+y^2$, $b = 1+xy^2$, $c = x+xy^3$. Set $u=1, v=x^2+xy^3 \in R $. 
    \begin{align*}
        ua+vb & = 1(x+y^2)+(x^2+xy^3)(1+xy^2) \\
        & = x+y^2+x^2+x^3y^2+xy^3+x^2y^5 \\
        & = x+y^2+x^2+ y^2+xy^3+x^2 \\
        & = x+xy^3 \\
        & = c
    \end{align*}
    \item For the $\llbracket 81,3,5\rrbracket$ code over $R=\mbb{F}_2[x,y]/(x^3+1,y^9+1)$ with $a=xy^3 + x^2y$, $b = 1+xy^8$, $c = x^2y^4+x^2y^6$. Set $u=y^3+y^4, v=xy^6+xy^7 \in R$, we have
    \begin{align*}
        ua+vb & = (y^3+y^4)(xy^3 + x^2y)+(xy^6+xy^7)(1+xy^8)\\
        & = xy^6+x^2y^4+xy^7+x^2y^5+xy^6+x^2y^{14}\\
        & \quad \quad +xy^7+x^2y^{15}\\
        & = x^2y^4+x^2y^6\\
        & = c
    \end{align*}
    \item For the $\llbracket 108,6,6\rrbracket$ code from \cite{menon2025magic} over $R = \mbb{F}_2[x,y,z]/(x^3+1,y^3+1,z^4+1)$ with $a = y^2z + x^2yz^3$, $b=x^2 + x^2yz^2$, and $c=z^2 + yz + x + x^2yz^3$. Set $u = z + xz^2 + xyz^3 + xy^2z^3 + x^2z + x^2y$ and $v=z+y\in R$.
    \begin{align*}
        ua+vb & = (z + xz^2 +\cdots + x^2y)(y^2z + x^2yz^3)\\
        & \quad +(z+y)(x^2 + x^2yz^2)\\
        & = y^2z^2+x^2yz^4 +\cdots+yx^2+x^2y^2z^2\\
        & = z^2 + yz + x + x^2yz^3\\
        & = c
    \end{align*}
    \item For the $\llbracket 240,6,8\rrbracket$ code from \cite{menon2025magic} over $R = \mbb{F}_2[x,y,z]/(x^4+1,y^4+1,z^5+1)$ with $a=y^3 + x^2yz^2$, $b=xz^4 + x^3y^3z$, $c=xy^2z^3 + xy^3z^4 + x^2y^2z + x^2y^3z^2$. Set $u=y^2z^4+xz^4+xy^3z^4+x^2z+x^2z^2+x^3yz$, $v=y^2+xy^2z^2\in R$. \begin{align*}
        ua+vb & = (y^2z^4+xz^4+\cdots +x^3yz)(y^3 + x^2yz^2)\\
        & \quad +(y^2+xy^2z^2)(xz^4 + x^3y^3z)\\
        & = y^5z^4+x^2y^3z^6\cdots+x^2y^2z^6+x^4y^5z^3\\
        & = xy^2z^3 + xy^3z^4 + x^2y^2z + x^2y^3z^2\\
        & = c
    \end{align*}
\end{enumerate}

We emphasize that this is only a sufficient condition and often not necessary. For instance, the trivariate tricycle code in Table I of the main text is defined over a group with even order, nevertheless satisfying the logical transversality. 

\section{Details of the codes and physical implementation\label{sec:appx_codedetails}}
Here, we provide in Table~\ref{table:classical_code_choice} the concrete choices of the classical codes for obtaining the quantum codes in Table I of the main text.
\begin{table*}
\centering
\begin{tabular}{c|c|c|c|c|c|c}
\toprule
\textbf{Code} & $R$ & $H_A$ & $H_B$ & $H_C$ & \textbf{2D} $\llbracket n,k,d\rrbracket$ & \textbf{3D} $\llbracket n,k,d\rrbracket $ \\
\hline
Pentagon & $\mbb{F}_2$ & $\begin{pmatrix}
    1 & 1 & 0  \\
    0 & 1 & 1 \\
    1 & 0 & 1\\
\end{pmatrix}$ & {\tiny   \setlength{\arraycolsep}{1pt}$\begin{pmatrix}
    1 & 0 & 0 & 0 & 1 & 1 & 0 & 0 & 1 & 0\\
    1 & 1 & 0 & 0 & 0 & 0 & 1 & 0 & 0 & 1\\
    0 & 1 & 1 & 0 & 0 & 1 & 0 & 1 & 0 & 0\\
    0 & 0 & 1 & 1 & 0 & 0 & 1 & 0 & 1 & 0\\
    0 & 0 & 0 & 1 & 1 & 0 & 0 & 1 & 0 & 1
\end{pmatrix}$} & $\begin{pmatrix}
    1 & 1 & 0  \\
    0 & 1 & 1 \\
    1 & 0 & 1\\
\end{pmatrix}$ & $\llbracket 45, 7, 3\rrbracket $ & $\llbracket 180, 8, 3\rrbracket $ \\
\hline
Lifted toric & $\mbb{F}_2[x]/(x^2 + 1)$ & $\begin{pmatrix}
    1 & x  \\
    1 & 1  \\
\end{pmatrix}$ & $\begin{pmatrix}
    1 & x  \\
    1 & 1  \\
\end{pmatrix}$ & $\begin{pmatrix}
    1 & x  \\
    1 & 1  \\
\end{pmatrix}$ & $\llbracket 16, 2, 4\rrbracket $ & $\llbracket 48, 3, 4\rrbracket $ \\
\hline
Lifted toric & $\mbb{F}_2[x]/(x^2 + 1)$ & $\begin{pmatrix}
    1 & x & 0 \\
    0 & 1 & 1 \\
    1 & 0 & 1
\end{pmatrix}$ & $\begin{pmatrix}
    1 & x & 0 \\
    0 & 1 & 1 \\
    1 & 0 & 1
\end{pmatrix}$ & $\begin{pmatrix}
    1 & x & 0 \\
    0 & 1 & 1 \\
    1 & 0 & 1
\end{pmatrix}$ & $\llbracket 36, 2, 6\rrbracket $ & $\llbracket 162, 3, 6\rrbracket $ \\
\hline
BB/BT & {\scriptsize$\mbb{F}_2[x,y]/(x^3 + 1, y^3 + 1)$} & {\scriptsize$a = x^2y + x^2y^2$} & {\scriptsize$b = 1+xy^2$} & {\scriptsize$c = x+x^2y$} & $\llbracket 18, 2, 3\rrbracket $ & $\llbracket 27, 3, 3\rrbracket $ \\
\hline
BB/BT  & {\scriptsize$\mbb{F}_2[x,y]/(x^3+1,y^5+1)$} &  {\scriptsize$a = x+y^2$} & {\scriptsize$b = 1+xy^2$} & {\scriptsize$c = x+xy^3$} & $\llbracket 30, 2, 5\rrbracket $ & $\llbracket 45, 3, 4\rrbracket $  \\
\hline
BB/BT  & {\scriptsize$\mbb{F}_2[x,y]/(x^3+1,y^9+1)$} & {\scriptsize$a=xy^3 + x^2y$} & {\scriptsize$b = 1+xy^8$} & {\scriptsize$c = x^2y^4+x^2y^6$} & $\llbracket 54, 2, 6\rrbracket $ & $\llbracket 81, 3, 5\rrbracket $ \\
\hline
TB/TT  & {\scriptsize$\mbb{F}_2[x,y,z]/(x^2+1,y^5+1,z^7+1) $} & {\scriptsize$a = y^2z^2 + xy^2z^4$} & {\scriptsize$b = 1 + xy^2z^3$} & {\scriptsize$c = yz + y^4z^2$} & $\llbracket 140, 2, 8\rrbracket $ & $\llbracket 210, 3, 7\rrbracket $ \\
\hline
TB/TT \cite{menon2025magic} & {\scriptsize $\mbb{F}_2[x,y,z]/(x^3+1,y^3+1,z^4+1)$} & {\scriptsize$a=y^2z + x^2yz^3$} & {\scriptsize$b=x^2 + x^2yz^2$} & {\scriptsize$c=z^2 + yz + x + x^2yz^3$} & $\llbracket 72,4,6\rrbracket$ & $\llbracket 108,6,6\rrbracket$ \\
\hline
TB/TT \cite{menon2025magic} & {\scriptsize$\mbb{F}_2[x,y,z]/(x^4+1,y^4+1,z^5+1)$} & {\scriptsize$a=y^3 + x^2yz^2$} & {\scriptsize$b=xz^4 + x^3y^3z$} & {\scriptsize$c=xy^2z^3 + xy^3z^4 + x^2y^2z + x^2y^3z^2$} & $\llbracket 160,4,8\rrbracket$ & $\llbracket240,6,8\rrbracket$\\
\hline
\hline
\end{tabular}
\caption{\textbf{Detailed construction of the 3D/2D codes}. We provide the defining group algebra $R$ and the choices of the classical check matrices $H_A$, $H_B$, $H_C$ over $R$ for each pair of 3D-2D codes in Table I of the main text. For these examples, we construct the 2D code as the product of $H_A$ and $H_B$. The codes in the last two rows are constructed using the same polynomials as the trivariate tricycle codes from Table 2 of \cite{menon2025magic}. 
}
\label{table:classical_code_choice}
\end{table*}

\begin{figure*}
    \includegraphics[width=1\textwidth]{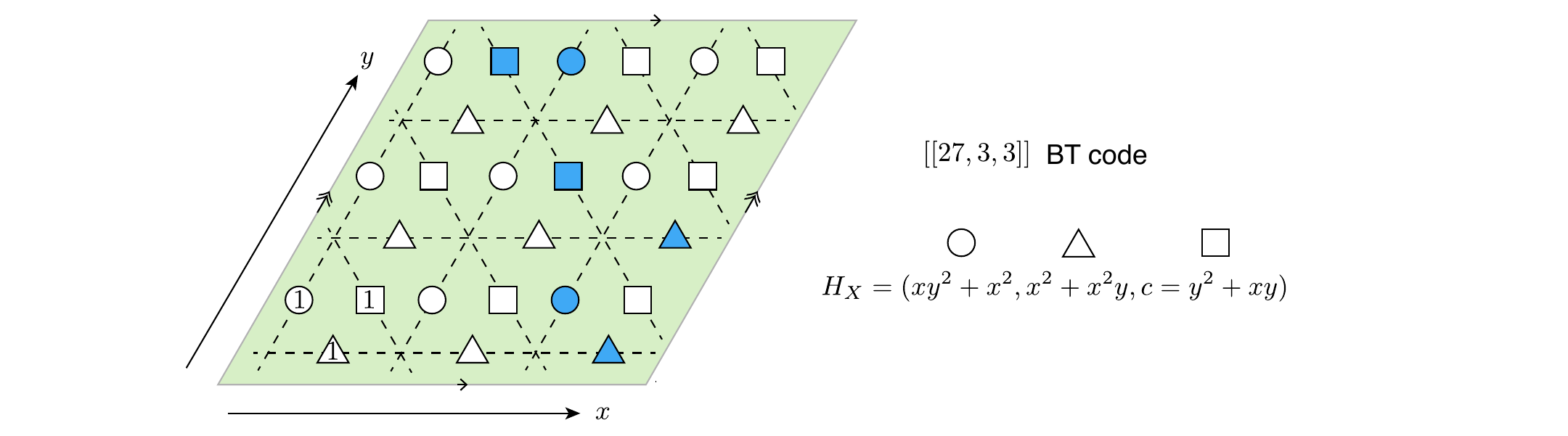}
    \caption{\textbf{Physical implementation of a BT code on a 2D torus with translationally invariant checks}. The three sectors of physical qubits are indicated by the circles, the triangles, and the squares, respectively, embedded on a triangular lattice on a 2D torus with periodical boundary conditions. 
    For a BT code defined over a bivariate polynomial ring $R = \mbb{F}_2[x, y]/(x^{l_x} + 1, y^{l_y} + 1)$, the qubits in each sector are labeled by monomials $\{x^i y^j\}_{i \in [lx], j \in [l_y]}$, with the qubits labeled by $1$ explicitly shown. The support of an example $X$ check is highlighted in blue, consisting of all the monomials involved in the $X$ check matrix $H_x = (a, b, c)$, where $a, b, c$ are the defining polynomials of the code. All other $X$ checks are defined by the translations of this example check on the torus. The $Z$ checks have an analogous translationally-invariant structure (see text). The example code is a $\llbracket 27, 3, 3\rrbracket$ BT code with polynomials $a = xy^2 + x^2$, $b = x^2 + x^2y$, $c = y^2 + xy$ with $l_x = l_y = 3$, which is equivalent to the first BT code in Table~\ref{table:classical_code_choice}.}
    \label{fig:physical_layout}
\end{figure*}

In addition, we show in Fig.~\ref{fig:physical_layout} that the bivariate tricycle (BT) codes in Table~\ref{table:classical_code_choice}, despite their 3D code structure, can be embedded on a 2D torus with translationally invariant checks --- analogous to the layout for the bivariate bicycle (BB) codes~\cite{bravyi_high-threshold_2024}.

Recall that, a 3D BT code has check matrices (see Eq.~\eqref{eq:3D_check_mats}):
\begin{equation}
\begin{aligned}
    H_X & = \begin{bmatrix} 
    a & b & c\\
    \end{bmatrix}, \\
    H_Z & = \begin{bmatrix}
        b^* & a^* & 0\\
        c^* & 0 & a^* \\
        0& c^* & b^*
    \end{bmatrix},
\end{aligned}
\label{eq:3D_BT_codes}
\end{equation}
where $a, b, c \in \mbb{F}_2[x, y]/(x^{l_x} + 1, y^{l_y} + 1)$
where $(\bullet)^*$ denotes the conjugate of each polynomials. 
As illustrated in Fig.~\ref{fig:physical_layout}, when embedding the three blocks of the physical qubits corresponding to the three columns in Eq.~\eqref{eq:3D_BT_codes} on a triangular lattice on a 2D torus, where the qubits in each block are labeled by the monomials $\{x^i y^j\}_{i \in [l_x], j \in [l_y]}$, the $X$ checks $H_X$ are translationally invariant. 
The $Z$ checks can be divided into three groups, corresponding to the three rows of $H_Z$ in Eq.~\eqref{eq:3D_BT_codes}, each being translationally invariant --- analogous to the $X$ checks.
\section{Cup product and Symmetric Triple Cup Product (STCP) CCZ\label{sec:appx_cupproduct}}
In this work we use two related constructions of transversal CCZ gates: the cup product from \cite{breuckmann_cups_2024} and the symmetric triple cup product (STCP) from \cite{menon2025magic}. Both start from the 3D code expressed as a 4-term cochain complex
\[\begin{tikzcd}[cells={nodes={minimum height=2em}}]
C^0 \arrow[r,"d_0"] & C^1 \arrow[r,"d_1"] & C^2 \arrow[r, "d2"] & C^3.
\end{tikzcd}\]
where we identify qubits with 1-cochains $C^1$, X-checks with 0-cochains $C^0$, and Z-checks with 2-cochains $C^2$. A binary cup product is a family of bilinear maps $\cup: C^i\times C^j \rightarrow C^{i+j}$. Taking the product of three classical codes with a cup product, one obtains an induced cup product on the quantum code, which induces a cup product on the cohomologies. This corresponds to a circuit defined by the trilinear function $f_{\cup}(q_i,q_j,q_k)=|(q_i\cup q_j)\cup q_k|\pmod{2}$ where $q_i,q_j,q_k$ indicate qubits from each of the three code blocks, and a physical CCZ gate is applied between those qubits if $f_{\cup} = 1$. The construction of this cup product CCZ relies on a set of conditions on the coboundaries (check matrices) of the classical codes, which ensure that the circuit produced by $f_\cup$ corresponds to a cohomology invariance and thus corresponds to a diagonal logical operation. All codes which we constructed (all except the last two rows of Table \ref{table:classical_code_choice}) used this cup product method. We refer the readers to Refs.~\cite{breuckmann_cups_2024, jacob2025single} for details about this cup-product construction. 

The codes in the last two rows of Table \ref{table:classical_code_choice} from \cite{menon2025magic} are examples of tricycle codes with a constant depth CCZ constructed using the Symmetric Triple Cup Product (STCP) method. It is similar to the above cup product method, but it relaxes the conditions on the classical codes via a modified and more symmetrical trilinear function to achieve higher-rate examples. This is paired with a symmetric integrated Leibniz rule. The corresponding trilinear function similarly defines a code-space preserving CCZ circuit. We refer the readers to Ref.~\cite{menon2025magic} for more details of this symmetrical cup-product construction.

\section{Details for simulations\label{ref:appx_simulationdetails}}
\begin{figure*}
    \centering
    \includegraphics[width=1\textwidth]{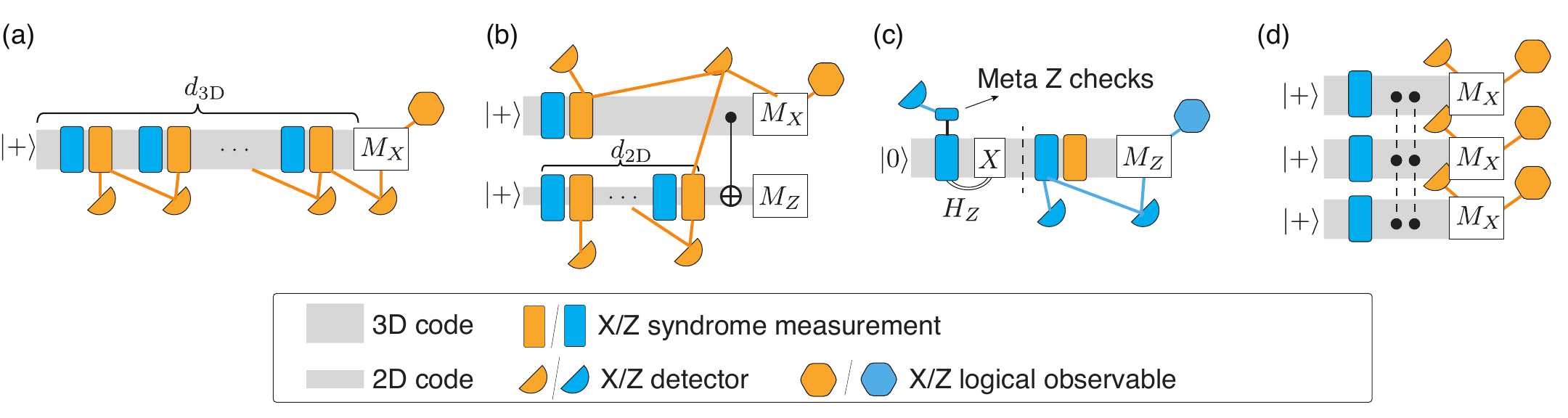}
    \caption{\textbf{Illustration of the numerically simulated circuits.} (a) $X$-basis memory experiment (Fig. 2(a) in the main text). (b) Transversal teleportation from the 2D code to the 3D code (Fig. 2(b) in the main text). (c) Single-shot logical $\ket{+}$ state preparation (Fig. 2(c) in the main text). (d) Magic state preparation (Table II in the main text). }
    \label{fig:simulation_circuits}
\end{figure*}

In this section, we provide the details of the numerical simulations performed in this work. Unless specially noted, the circuits are all Clifford circuits and we simulate them efficiently using Stim~\cite{gidney2021stim}. We consider the standard circuit-level depolarizing noise model, but excluding idling errors. 
For simplicity, we use the coloration syndrome extraction circuit~\cite{tremblay2022constant} that extracts $Z$ and $X$ syndromes separately each in a CNOT depth $w$ for any CSS code with degree-$w$ check matrices.
More compact circuits with interleaving $X$ and $Z$ syndrome extraction in a total CNOT depth $w$ could also be used~\cite{menon2025magic}.
We note that all 3D codes considered in this work have a biased distance $d_X \gg d_Z = d$ and their $Z$ checks are redundant, giving rise to meta $Z$ checks $M_Z$ satisfying $M_Z H_Z = 0$ ($\mod 2$) (see Sec.~\ref{sec:appx_background}).

As illustrated in Fig.~\ref{fig:simulation_circuits}(a), we simulate the memory logical error rates of a $\llbracket n_{\mr{3D}}, k_{\mr{3D}}, d_{\mr{3D}} \rrbracket$ 3D code (Fig. 2(a) of the main text) by initializing the physical qubits in the $\ket{+}^{\otimes n_{\mr{3D}}}$ states, performing $d_{\mr{3D}}$ rounds of syndrome extractions, and transversally measuring the data qubits in the $X$ basis. We set the logical $X$ observables to the $k_{\mr{3D}}$ logical operators reconstructed from the final transversal measurements.
We set $X$ detectors by taking the parities of adjacent-round $X$ syndromes in a standard way, and decode them using the efficent minimum-weight-perfect-matching (MWPM) decoder~\cite{Higgott2025sparseblossom} as the $X$ check matrices $H_X$ have column-weight $2$.
Let $P$ denotes the logical error probability --- the probability that any logical observable takes $-1$ after correction ---  we report the logical error rate per logical qubit per code cycle: $p_L = 1 - (1 - P)^{1/(k_{\mr{3D}} d_{\mr{3D}})}$.
We note that these simulations only probe the logical $Z$ error rates, whereas the logical $X$ error rates are negligible in comparison since $d_X \gg d_Z$.

As illustrated in Fig.~\ref{fig:simulation_circuits}(b), we estimate the logical error rates of the transversal dimension-jump schemes (Fig. 2(b) of the main text) by simulating the transversal teleportation circuit from the 2D codes to the 3D codes (Fig. 1(b) of the main text). Again, we only prove the logical $Z$ error rates by simulating the teleportation of logical $\ket{+}$ states. 
Starting with a 3D code and its $\llbracket n_{\mr{2D}}, k_{\mr{2D}}, d_{\mr{2D}} \rrbracket$ component 2D code initialized transversally in the $X$ basis, we perform $1$ and $d_{\mr{2D}}$ rounds of syndrome extractions for the 3D code and the 2D code, respectively, apply the homomorphic CNOT, and transversally measure the 3D code and the 2D code in the $X$ and $Z$ basis, respectively. 
Note that we include only $1$ code cycle for the 3D code since it supports single-shot preparation of logical $\ket{+}$ states, as will be demonstrated in Fig.~\ref{fig:simulation_circuits}(c).
Also, we omitted the Pauli $X$ feedback corrections on the 3D code since it commutes with the final $X$ basis measurement in this case. We set the logical observables to be the $k_{\mr{3D}}$ logical $X$ operators reconstructed from the final $X$ measurements on the 3D code.
We set the parity $X$ detectors for the 3D codes and 2D codes separately in a standard way, except for the final correlated $X$ detectors on the 3D code whose parities also include the final-round $X$ syndromes on the 2D code before the homomorphic CNOT. 
We decode these $X$ detectors jointly using the BP+OSD decoder\cite{roffe2020decoding}.
Let $P$ denotes the logical error probability, we report the logical error rate per 2D logical qubit: $p_L = 1 - (1 - P)^{1/k_{\mr{2D}}}$.

Next, we demonstrate single-shot state preparation of logical $\ket{+}$ states for the 3D codes (Fig. 2(c) of the main text). A standard protocol achieves this by (1) initializing the physical qubits transversally in $\ket{+}$, (2) measuring one round of $Z$ checks, (3) applying corrections on the $Z$ syndromes using the meta $Z$ checks, (4) applying a Pauli $X$ correction to fix the (corrected) $Z$ syndromes to $+1$. Such a protocol aims for preparing the logical $\ket{+}$ states of the 3D code (with $+1$ stabilizers) up to locally stochastic errors. 
However, if the code does not possess single-shot properties such as soundness~\cite{campbell2019theory} and meta checks, there is no effective mechanism to infer the syndrome errors at step (3) since the $Z$ checks are nondeterministic, and the residual syndrome errors would lead to large residual $X$ data errors after fixing the $Z$ checks to $+1$ at step (4), resulting in states far from the code space (see 2D surface code results in Fig. 2(c) of the main text). 
To capture such potential error mechanisms, we perform the following simulation:
As shown in Fig.~\ref{fig:simulation_circuits}(c), starting with a 3D code initialized transversally in the $Z$ basis, we measure the $Z$ checks once, apply another round of syndrome extraction (including both $X$ and $Z$ checks), and transversally measure in the $Z$ basis. 
Although the initial $Z$ checks are deterministic in this case, we assume they are random and do not utilize them as detectors. Instead, we perform the following two-step procedure to fix the stabilizers and prepare the logical state: 1. perform syndrome correction using the meta check matrix $M_Z$ and a relay-BP decoder~\cite{muller2025improved} 2. based on the corrected $Z$ syndromes, apply a Pauli $X$ correction to fix them to $+1$ using the $Z$ check matrix $H_Z$ and an integer-programming decoder~\cite{cain2024correlated}.
Note that the integer-programming decoder would not be necessary in the true senario for logical $\ket{+}$ state preparation, where the $Z$ syndromes are random and essentially no decoders would be needed (any correction that fix syndromes to $+1$ suffices). Note that in the simulation here, since the $Z$ checks should be deterministically $+1$, any $X$ corrections would represent the residual $X$ errors on top of the prepared logical state. 
We then probe the size of these errors using the subsequent QEC cycle: we perform the standard memory-type decoding using the subsequent $Z$ checks (assuming they are deterministically $+1$) and the final transversal $Z$ measurements, and check logical errors by setting the logical observables to the $Z$ logical operators reconstructed from the final transversal measurements. We decode these detector syndromes using the efficient relay-BP decoder again. Let $P$ denote the logical error probability, we report the logical error rate per logical qubit $p_L = 1 - (1 - P)^{1/k_{\mr{3D}}}$.

Next, as shown in Fig.~\ref{fig:simulation_circuits}(d), we estimate the logical error rate of the magic states prepared via: (1) transversally initialize three 3D code blocks in $\ket{+}$, (2) measure the $Z$ checks once and fix the syndromes (corrected using the meta checks described above) to $+1$, (3) apply the (depth-$2$) CCZ circuit. 
Such a protocol fault-tolerantly prepares the magic states since: (a) based on the simulation in Fig.~\ref{fig:simulation_circuits}(c), we know that the $Z$ checks of the code are fixed to $+1$ up to locally stochastic $X$ residual errors, (b) the $X$ checks are by default $+1$ up to locally stochastic $Z$ errors due to the $X$ basis initialization, (c) the CCZ circuit is of low depth with limited error propagation. 
However, instead of adding the physical CCZ gates, which would make the circuit non-Clifford and hard to simulate, we only add approximate error channels associated with theses gates described in the following.
The residual $X$ errors after the single-shot state preparation will propagate to $CZ$ errors, which we approximate by correlated Pauli $ZZ$ errors; we model the new gate errors associated with the CCZ gates with three-qubit depolarizing noise after each gate. 
We approximate the above errors by adding the following Pauli error channel after the CCZ circuit: let $\delta \in \mbb{F}_2^{n_1 \times n_2 \times n_3}$ represent the physical CCZ circuit across the three code blocks, for each $i \in [n_1]$, $j \in [n_1]$, and $k \in [n_3]$ we add a multi-qubit $Z$ error supported qubits indexed by $\{(i, j^{\prime}, k^{\prime}) \mid \delta_{i, j^{\prime}, k^{\prime}} = 1\}$, $\{(i^{\prime}, j, k^{\prime}) \mid \delta_{i^{\prime}, j, k^{\prime}} = 1\}$, and $\{(i^{\prime}, j^{\prime}, k) \mid \delta_{i^{\prime}, j^{\prime}, k} = 1\}$, respectively, with a probability $p$.
For instance, for the BT codes with a depth-$2$ CCZ circuit, each physical qubit in one code block couples to two qubits in each of the other code blocks, resulting in a collection of five-qubit correlated $Z$ errors.
Finally, we measure the code transversally in the $X$ basis and probe the logical errors by checking the values of the logical $X$ operators. 
We also construct $X$ detectors using these final transversal measurements and perform postselection whenever these detectors are nontrivial. 
Note that this only probes the logical $Z$ errors, and we assume that the logical $X$ error rate is negligible by running very light postselection on the $Z$ detectors using, e.g. soft information~\cite{gidney2024magic}, as these codes have biased distance $d_X \gg d_Z$. We report the logical error rate as simply the logical error probability that any of the $3k_{\mr{3D}}$ logical qubits have an error, which is related to the fidelity of the entangled magic states across three 3D code blocks.
\subsection{Calculation of space-time cost\label{sec:appx_calcspacetimecost}}
Finally, we provide the details on calculating the space-time cost of the magic-state preparation scheme in Table II of the main text.
For the space-time cost values listed in Table II of the main text, we calculate it according to the following formula:
\begin{equation}
    \text{\begin{tabular}{c}
        Space-time cost\\
        per logical qubit
    \end{tabular}}
    = \text{\begin{tabular}{c}
        Physical qubits\\
        per logical
    \end{tabular}}\times\text{rounds}
      \times (r+2),
\end{equation}
where $r$ is the maximum weight of any stabilizer.

For the toric and BB/BT/TB/TT codes in our construction, we use 
\begin{equation}
    \text{\begin{tabular}{c}
        Physical qubits\\
        per logical
    \end{tabular}}=\frac{(n + \text{total \# of stabilizers})}{k}
\end{equation} and $\text{rounds}=1/(\text{success rate})$.
In particular, the top four rows of Table II are calculated as 
\begin{equation}
    \begin{aligned}
        \text{BT } \llbracket 27,3,3\rrbracket: \quad & \frac{27 + 9 + 27}{3\cdot 0.704}\times(6+2) = 239\\
    \text{BT } \llbracket 81,3,5\rrbracket: \quad &
\frac{81 + 27 + 81}{3\cdot 0.349}\times(6+2) = 1444\\
\text{Toric } \llbracket 81,3,3\rrbracket: \quad &
\frac{81 + 27 + 81}{3\cdot 0.349}\times(6+2) = 1444 \\
\text{Toric } \llbracket 375,3,5\rrbracket: \quad & 
\frac{375 + 125 + 375}{3\cdot 0.008}\times(6+2) = 291,667
    \end{aligned}
\end{equation}
For the color codes, we obtained the value of $\text{qubits}\times\text{rounds}$ from the Figure 1 of \cite{gidney2024magic} (specifically, the x-axis values of the points labeled ``ungrown d=3" and ``ungrown d=5" for the $\llbracket 7,1,3\rrbracket$ and $\llbracket 19,1,5 \rrbracket$ color codes, respectively).

\section{Exact code distance calculations\label{sec:appx_exactcodedistance}}
For all the codes examples listed in Table~\ref{table:classical_code_choice} -- which include all codes used in our simulations -- we verify the exact code distance using a binary integer linear program (ILP). We compute the distance of each code by formulating the search for minimal-weight logical operators as a series of ILPs, one for each logical qubit. Since these are all CSS codes, we can find a basis of $k$ logical operators, $L_X$ and $L_Z$, in the X and Z bases, respectively. For each logical X-operator $\overline{X}_i$ in $L_X$, introduce binary decision variables $x_j\in\{0,1\}$ representing whether a Pauli Z acts on qubit $j$. The objective is to minimize the Hamming weight $\sum_j x_j$ of this Z-type operator, subject to the constraints that it commutes with all X-type stabilizers ($H_Xx = 0\pmod{2}$) and anticommutes with the chosen X-logical ($x^T\overline{X}_i=1\pmod{2}$). Let $w_i$ denote the optimal objective value for logical operator $\overline{X}_i$. Then, Z-distance is $d_Z = \min(w_1,\ldots,w_k)$. We obtain the X-distance $d_X$ analogously by exchanging $H_X$ and $H_Z$ and working with logical Z-operators. The above method, implemented using Gurobi \cite{gurobi}, quickly converged for all the code examples except for X-distance of the trivariate tricycle code $\llbracket 210,3,7\rrbracket$ which is higher than the Z-distance.

\end{appendix}

%

\end{document}